\theoremstyle{plain}
\newtheorem{theorem}{Theorem}
\newtheorem{lemma}[theorem]{Lemma}
\newtheorem{proposition}[theorem]{Proposition}
\newtheorem{corollary}[theorem]{Corollary}
\theoremstyle{definition}
\newcommand\xqed[1]{%
	\leavevmode\unskip\penalty9999 \hbox{}\nobreak\hfill\quad\hbox{#1}%
}
\newcommand\remarkend{\xqed{$\triangle$}}
\def\@endtheorem{\remarkend\endtrivlist\@endpefalse }
\theoremstyle{remark}
\def\@endtheorem{\endtrivlist\@endpefalse }
\crefname{theorem}{Theorem}{Theorems}
\crefname{lemma}{Lemma}{Lemmas}
\crefname{proposition}{Proposition}{Propositions}
\crefname{corollary}{Corollary}{Corollaries}
\crefname{definition}{Definition}{Definitions}
\crefname{assumption}{Assumption}{Assumptions}
\crefname{remark}{Remark}{Remarks}
\crefname{conjecture}{Conjecture}{Conjectures}
\crefname{subsection}{subsection}{subsections}
\crefname{subsubsection}{subsection}{subsections}
\newcommand{\Tr}{\operatorname{Tr}}
\newcommand{\sgn}{\operatorname{sgn}}
\newcommand{\cI}{\mathcal{I}}
\newcommand{\cJ}{\mathcal{J}}
\newcommand{\N}{\mathbb{N}}
\newcommand{\C}{\mathbb{C}}
\newcommand{\bfA}{\mathbf{A}}
\newcommand{\bfB}{\mathbf{B}}
\newcommand{\bfD}{\mathbf{D}}
\newcommand{\bfalpha}{\boldsymbol{\alpha}}
\newcommand{\bfbeta}{\boldsymbol{\beta}}
\newcommand{\bfdelta}{\boldsymbol{\delta}}
\newcommand{\bfeps}{\boldsymbol{\varepsilon}}
\newcommand{\bfeta}{\boldsymbol{\eta}}
\newcommand{\mfH}{\mathfrak{H}}
\newcommand{\HS}{\textmd{\normalfont HS}}
\newcommand{\ceqq}{\coloneqq}
\newcommand\mydots{\ifmmode\mathellipsis\else.\kern-0.08em.\kern-0.08em.\fi}
\pgfplotsset{compat=1.15}
\title[Hilbert--Schmidt norm of fermionic reduced density matrices]{Hilbert--Schmidt norm estimates for fermionic reduced density matrices}
\author{François L. A. Visconti}
\address{Department of Mathematics, LMU Munich, Theresienstrasse 39, 80333 Munich, Germany}
\email{visconti@math.lmu.de}
\begin{document}
	\maketitle
	
	\begin{abstract}
		We prove that the Hilbert--Schmidt norm of $k$-particle reduced density matrices of $N$-body fermionic states is bounded by $C_kN^{k/2}$ - matching the scaling behaviour of Slater determinant states. This generalises a recent result of Christiansen \cite{Christiansen2024HSEstimates} on $2$-particle reduced density matrices to higher-order density matrices. Moreover, our estimate directly yields a lower bound on the von Neumann entropy and the 2-Rényi entropy of reduced density matrices, thereby providing further insight into conjectures of Carlen--Lieb--Reuvers \cite{Carlen2016EntropyEnanglement,Reuvers2018AlgorithmEE}.
	\end{abstract}
	
	\tableofcontents
	
	\section{Introduction}
	Let $(\mathfrak{H},\langle\cdot,\cdot\rangle)$ be a separable Hilbert space. We consider the $N$-body fermionic Hilbert space $\mathfrak{H}^{\wedge N}$ consisting of wavefunctions $\Psi\in\mathfrak{H}^N$ which are antisymmetric with respect to exchange of variables, meaning that they satisfy 
\begin{equation}
	\label{eq:wavefunction_antisymmetry}
	U_\sigma\Psi = \sgn(\sigma)\Psi,
\end{equation}
for all permutations $\sigma$ of $\{1,\dots,N\}$. Here $\sgn(\sigma)$ denotes the sign of $\sigma$ and $U_\sigma$ is the permutation operator defined by
\begin{equation}
	\label{eq:permutation_operator}
	U_\sigma u_1\otimes\dots\otimes u_N \coloneqq u_{\sigma(1)}\otimes\dots\otimes u_{\sigma(N)},
\end{equation}
for all $u_1,\dots,u_N\in\mathfrak{H}$.

Given a normalised state $\Psi\in\mathfrak{H}^{\wedge N}$, we define the $k$-particle reduced density matrix as
\begin{equation}
	\label{eq:k_prdm_def}
	\Gamma^{(k)} \coloneqq {N \choose k}\Tr_{k+1\rightarrow N}\left|\Psi\right\rangle\left\langle\Psi\right|.
\end{equation}
It is well-known that $\Gamma^{(k)}$ is nonnegative and trace-class \cite{Simon1979TraceIdeals} with
\begin{equation}
	\label{eq:k_prdm_trace}
	\Tr\Gamma^{(k)} = {N \choose k}.
\end{equation}
Therefore, we have the trivial bound $\Vert\Gamma^{(k)}\Vert_\textmd{op} \leq {N \choose k}$. Though this estimate is optimal in the \textit{bosonic} case, it is far from it in the fermionic one. Indeed, for $\Gamma^{(1)}$, the well-known \textit{Pauli exclusion principle} implies the much stronger bound $\Vert\Gamma^{(1)}\Vert_\textmd{op} \leq 1$, which is optimised by \textit{Slater determinants}, and for $\Gamma^{(2)}$, Yang \cite{Yang1962ConceptOR} proved the optimal bound $\Vert\Gamma^{(2)}\Vert_\textmd{op} \leq N$, which is remarkably not maximised by Slater determinants. \footnote{The optimisers are \textit{Yang pairing states}, which play a role in the BCS theory of superconductivity.} More generally, Yang \cite{Yang1963PropRDM} ($k$ even) and Bell \cite{Bell1962conjectureCNYang} ($k$ odd) proved the bound $\Vert\Gamma^{(k)}\Vert_\textmd{op} \leq C_kN^{\lfloor k/2\rfloor}$. Though the constant is not optimal, the bound can be shown to be of the right order using a trial state argument (see \cite{Carlen2016EntropyEnanglement,Reuvers2018AlgorithmEE} for a conjecture on the optimal constant).

It is easy to see, using for example Coleman's theorem \cites{Coleman1963SFDM}[Th. 3.2]{Lieb2010Stability} or the Schmidt decomposition, that the Hilbert--Schmidt norm of $\Gamma^{(1)}$ is maximised by Slater determinants, meaning that it obeys $\Vert\Gamma^{(1)}\Vert_\textmd{HS} \leq N^{1/2}$. More generally, thanks to the estimate $\Vert\Gamma^{(k)}\Vert_\textmd{op} \lesssim N^{\lfloor k/2\rfloor}$ and the identity \eqref{eq:k_prdm_trace}, we can directly deduce that the Hilbert--Schmidt norm of $\Gamma^{(k)}$ is bounded by
\begin{equation}
	\label{eq:k_prdm_hs_norm_naive_bound}
	\Vert\Gamma^{(k)}\Vert_{\textmd{HS}} \leq \sqrt{\left\Vert\Gamma^{(k)}\right\Vert_\textmd{op}\Tr\Gamma^{(k)}} \lesssim N^{k/2}N^{\lfloor k/2\rfloor/2}.
\end{equation}
Considering that we used an identity and an almost optimal bound, one might be tempted to think that \eqref{eq:k_prdm_hs_norm_naive_bound} is almost optimal as well. This is however not the case at all for $k\geq 2$. Indeed, the case $k \geq 2$ is an open problem for which Carlen--Lieb--Reuvers \cite[Conjecture 2.5]{Carlen2016EntropyEnanglement} ($k = 2$)\cite[Conjecture 5.10]{Reuvers2018AlgorithmEE} ($k\geq2$) conjectured that the Hilbert--Schmidt norm of $\Gamma^{(k)}$ is maximised by Slater determinants, that is satisfying
\begin{equation}
	\label{eq:k_prdm_hs_estimate_optimal_conjecture}
	\Vert\Gamma^{(k)}\Vert_{\textmd{HS}} \leq {N \choose k}^{1/2}.
\end{equation}
Their conjecture was motivated by the weaker conjecture \cite[Conjecture 2.4]{Carlen2016EntropyEnanglement},\cite[Conjecture 5.10]{Reuvers2018AlgorithmEE} that the von Neumann entropy of $k$-particle reduced density matrices is minimised by Slater determinants, meaning that
\begin{equation}
	\label{eq:von_neumann_entropy_lower_bound}
	S(\gamma^{(k)}) \geq \log{N \choose k},
\end{equation}
where $S$ denotes the von Neumann entropy \eqref{eq:von_neumann_entropy} and $\gamma^{(k)}$ is the \textit{trace normalised} k-particle reduced density matrix of $\Psi$. The best-known result in this direction is the nearly optimal bound
\begin{equation}
	\label{eq:hs_norm_estimate_nearly_optimal}
	\Vert\Gamma^{(2)}\Vert_\textmd{HS} \leq \sqrt{5}N/2,
\end{equation}
which was proven recently by Christiansen \cite{Christiansen2024HSEstimates}. Moreover, the weaker bound
\begin{equation}
	\label{eq:von_neumann_entropy_lower_bound_non_optimal}
	S\left(\gamma^{(k)}\right) \geq 2\log N + \mathcal{O}(1)
\end{equation}
as $N \rightarrow \infty$ has also been proven recently by Christiansen \cite{Christiansen2024HSEstimates} for $k = 2$ (in accordance with \cite[Conjecture 2.6]{Carlen2016EntropyEnanglement}) and generalised to any $k\geq 2$ by Lemm \cite{Lemm2024ERI}. Note that the case $k = 2$ had already been proven by Lemm \cite{Lemm2017EFRD} under the much more restrictive assumption that the Hilbert space $\mathfrak{H}$ has finite dimension $d \geq N$ not too far from $N$. Though the estimate \eqref{eq:von_neumann_entropy_lower_bound_non_optimal} is of the correct order for $k = 2$, it is off by a factor $k/2$ for $k \geq 3$.

The goal of the present paper is to generalised Christiansen's bound \eqref{eq:hs_norm_estimate_nearly_optimal} to higher-order reduced density matrices. As a consequence, we obtain \eqref{eq:von_neumann_entropy_lower_bound_non_optimal} with the correct factor in front of $\log N$.

\begin{theorem}
	\label{th:HS_estimates}
	Let $\Psi \in \mathfrak{H}^{\wedge N}$ be normalised and define its $k$-particle reduced density matrix $\Gamma^{(k)}$ as in \eqref{eq:k_prdm_def}. Then,
	\begin{equation}
		\label{eq:k_prdm_hs_estimate}
		\Vert\Gamma^{(k)}\Vert_{\textmd{\normalfont HS}} \leq C_kN^{k/2},
	\end{equation}
	for some constant $C_k$ that depends only on $k$.
\end{theorem}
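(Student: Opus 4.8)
The plan is to prove \eqref{eq:k_prdm_hs_estimate} by induction on $k$, reducing the order-$k$ bound to lower-order ones via a decomposition of $\Tr (\Gamma^{(k)})^2$ according to how much two mode configurations overlap.

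First I would carry out the usual reductions: by a monotone limit it is enough to treat $\dim \mathfrak H<\infty$, and then, fixing an orthonormal basis $(e_p)_p$ of $\mathfrak H$ with associated CAR operators $a_p$ and writing $a_P:=a_{p_1}\cdots a_{p_k}$ for an increasing tuple $P=(p_1,\dots,p_k)$, one has (the matrix elements of $\Gamma^{(k)}$ in the induced basis of $\mathfrak H^{\wedge k}$ being $\langle\Psi,a_Q^* a_P\Psi\rangle$ up to sign)
\[
\|\Gamma^{(k)}\|_{\HS}^2=\Tr (\Gamma^{(k)})^2=\sum_{\substack{p_1<\dots<p_k\\ q_1<\dots<q_k}}\bigl|\langle a_Q\Psi,\,a_P\Psi\rangle\bigr|^2 .
\]
The induction hypothesis is the statement of the theorem for all orders $m<k$, all separable $\mathfrak H$ and all particle numbers $M$, with a constant $C_m$ independent of $M$; the base case $k=1$ is the Pauli principle, $\|\Gamma^{(1)}\|_{\HS}^2=\Tr (\Gamma^{(1)})^2\le\Tr\Gamma^{(1)}=N$.

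For the inductive step I would split the sum according to $j:=\#\bigl(\{p_1,\dots,p_k\}\cap\{q_1,\dots,q_k\}\bigr)\in\{0,\dots,k\}$, so that $\|\Gamma^{(k)}\|_{\HS}^2=\sum_{j=0}^k T_j$. The blocks with $1\le j\le k$ are elementary: writing $P=R\sqcup S$, $Q=R\sqcup T$ with $|R|=j$ and $\chi_R:=a_R\Psi\in\mathfrak H^{\wedge(N-j)}$, one has $\langle a_Q\Psi,a_P\Psi\rangle=\pm\langle\chi_R,a_T^* a_S\chi_R\rangle$, and dropping the disjointness constraint on $(S,T)$ gives, by the induction hypothesis applied to $\chi_R/\|\chi_R\|$,
\[
T_j\;\le\;\sum_{|R|=j}\|\Gamma^{(k-j)}_{\chi_R}\|_{\HS}^2\;\le\;C_{k-j}^2\,(N-j)^{k-j}\sum_{|R|=j}\|\chi_R\|^4 .
\]
Since $\|\chi_R\|^2=(\Gamma^{(j)})_{RR}=\langle\Psi,\textstyle\prod_{r\in R}n_r\Psi\rangle\le\langle\Psi,n_{r_0}\Psi\rangle$ for any $r_0\in R$, estimating with $r_0=\min R$, using $\sum_{r_2<\dots<r_j}\prod_{i\ge2}n_{r_i}\le\binom{\mathcal N}{j-1}$ (which equals $\binom{N}{j-1}$ on the $N$-particle sector) and the Pauli bound $\sum_p\langle\Psi,n_p\Psi\rangle^2=\Tr (\Gamma^{(1)})^2\le N$, one gets $\sum_{|R|=j}\|\chi_R\|^4\lesssim_j N^j$, hence $T_j\lesssim_k N^{(k-j)+j}=N^k$ for every $1\le j\le k$ (the case $j=k$ being the diagonal estimate $\sum_P(\Gamma^{(k)})_{PP}^2\lesssim_k N^k$).

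It remains to control the fully off-diagonal block $T_0=\sum_{P\cap Q=\varnothing}|\langle a_Q\Psi,a_P\Psi\rangle|^2$, and this is the heart of the matter: the trivial bound $T_0\le\|\Gamma^{(k)}\|_{\HS}^2$ is useless and Cauchy--Schwarz on the overlaps gives only $T_0\le(\Tr\Gamma^{(k)})^2\sim N^{2k}$, so one must exploit genuine cancellation between distinct index pairs. Here I would follow and extend the mechanism of Christiansen \cite{Christiansen2024HSEstimates}: because every $p_i$ differs from every $q_\ell$, the monomial $a_Q^* a_P$ can be reorganised by the canonical anticommutation relations, pairing creators with annihilators so as to rewrite $\langle\Psi,a_Q^* a_P\Psi\rangle$ in terms of matrix elements of $\Gamma^{(k-1)}$ and of strictly lower-order reduced density matrices plus a remainder; summing the squares and applying a final Cauchy--Schwarz, one aims to bound $T_0$ either directly by $O_k(N^k)$ or, more robustly, by $\theta\,\|\Gamma^{(k)}\|_{\HS}^2+O_k(N^k)$ with some $\theta<1$, the first piece then being absorbed into the left-hand side to close the induction. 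I expect precisely this analysis of $T_0$ — controlling the $k$-dependent combinatorics and keeping the absorption constant strictly below $1$ — to be the main obstacle, the remaining blocks and the bookkeeping of the $C_k$ being routine. (Inserting the resulting estimate $\|\Gamma^{(k)}\|_{\HS}\le C_kN^{k/2}$ into the standard entropy inequalities, as in \cite{Christiansen2024HSEstimates,Lemm2024ERI}, then yields the stated entropy consequences.)
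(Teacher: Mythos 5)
Your decomposition of $\Vert\Gamma^{(k)}\Vert_{\HS}^2=\sum_{j=0}^kT_j$ according to the overlap $j=\vert P\cap Q\vert$ is structurally the same splitting the paper performs (your $j$ is the paper's $k-s$ in Lemma~\ref{lemma:rewriting_hs_norm}), and your treatment of the blocks $1\le j\le k$ — conditioning on the common modes $R$, applying the order-$(k-j)$ bound to the states $a_R\Psi$, and controlling $\sum_{\vert R\vert=j}\Vert a_R\Psi\Vert^4\lesssim_j N^j$ via the Pauli principle — is a sound and genuinely different (arguably more elementary) route for those blocks than the paper's, which instead feeds \emph{every} level $s\le k$ into one descending recursion. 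However, the proposal has a genuine gap exactly where you flag it: the fully off-diagonal block $T_0$ is not proved, only announced. You write that you would ``follow and extend the mechanism of Christiansen'' and ``aim to'' bound $T_0$ by $O_k(N^k)$ or by $\theta\Vert\Gamma^{(k)}\Vert_{\HS}^2+O_k(N^k)$ with $\theta<1$; but producing that bound for general $k$ is the entire technical content of the paper, and no argument is supplied. Since $T_0$ is the dominant block a priori (trivially as large as $N^{2k}$ by Cauchy--Schwarz, as you note), the proof is not closed.

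It is also worth pointing out that the single-absorption scheme you envisage ($T_0\le\theta\Vert\Gamma^{(k)}\Vert_{\HS}^2+O_k(N^k)$ with one $\theta<1$) is not how the cancellation actually materialises, and there is no evidence it can be made to work directly. What the paper proves (Proposition~\ref{prop:cancellation_main_order}) is a \emph{multi-level} descent: the level-$t$ off-diagonal contribution is bounded by the levels $s<t$ with coefficients of size $N^{t-s}$, and the mechanism is parity-dependent in an essential way. For $t$ odd, after a Young inequality the level-$t$ term reappears on the right-hand side with a \emph{large negative} prefactor $-\tbinom{r-t}{t}/2$ (Lemma~\ref{lemma:main_estimate_odd_case}), so it can be moved to the left and one divides by $1+\tbinom{r-t}{t}/2\sim r^t$ — there is no small $\theta$ here at all. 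For $t$ even that sign is unfavourable, and the paper instead interpolates a $t$-body expression between a $(t-1)$-body and a $(t+1)$-body one via a $\tau$-weighted Young inequality with $\tau=N$ (Lemma~\ref{lemma:main_estimate_even_case}), discards the $(t+1)$-body piece and part of the $t$-body piece by completing nonnegative squares, and only then obtains an absorbable coefficient $(N-t)/(tN)<1$. All of this rests on the sign identities \eqref{eq:sign_permutation_equals_product}--\eqref{eq:sign_permutation_union_equals_product} and the combinatorial reshuffling Lemmas~\ref{lemma:rewritting_sums}--\ref{lemma:rewritting_sums_3}, none of which your proposal develops. In short: the outer architecture and the $j\ge1$ blocks are fine, but the core estimate is missing, and the route you sketch for it would need substantial modification to succeed.
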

While the constant in \eqref{eq:k_prdm_hs_estimate} is not optimal, the estimate  exhibits the correct asymptotic order, consistent with the conjectured bound \eqref{eq:k_prdm_hs_estimate_optimal_conjecture}. This result is particularly interesting when put into perspective with the bound $\Vert\Gamma^{(k)}\Vert_\textmd{op} \lesssim N^{\lfloor k/2\rfloor}$ and the normalisation condition \eqref{eq:k_prdm_trace}. Note first that \eqref{eq:k_prdm_hs_estimate} does not display the peculiar dependency in the parity of $k$ that the operator-norm bound does. What this roughly says is that in the case where $k$ is even, $\Gamma^{(k)}$ can have large eigenvalues of order $N^{k/2}$ but it cannot have too many of them, whereas in the odd case $\Gamma^{(k)}$ cannot even have large eigenvalues of order $N^{k/2}$.

The bound \eqref{eq:k_prdm_hs_estimate} has immediate consequences on the entanglement entropy of the \textit{trace normalised} $k$-particle reduced density matrix
\begin{equation}
	\label{eq:k_prdm_trace_normalised}
	\gamma^{(k)} \coloneqq \Tr_{k+1\rightarrow N}\left|\Psi\right\rangle\left\langle\Psi\right| = {N \choose k}^{-1}\Gamma^{(k)}.
\end{equation}
More specifically, the estimate \eqref{eq:k_prdm_hs_estimate} directly yields a lower on the \textit{von Neumann entropy}
\begin{equation}
	\label{eq:von_neumann_entropy}
	S(\gamma^{(k)}) \coloneqq -\Tr\left(\gamma^{(k)}\log \gamma^{(k)}\right)
\end{equation}
and the $2$-\textit{Rényi entropy}
\begin{equation*}
	S_2(\gamma^{(k)}) \coloneqq - \Tr\left(\log \left[\left(\gamma^{(k)}\right)^2\right]\right).
\end{equation*}
Put on formal grounds, this corresponds to the following corollary.
\begin{corollary}
	\label{corollary:entropy_bounds}
	Let $\Psi \in \mathfrak{H}^{\wedge N}$ be normalised and define its trace normalised $k$-particle reduced density matrix $\gamma^{(k)}$ as in \eqref{eq:k_prdm_trace_normalised}. Then
	\begin{equation}
		\label{eq:K-particle_density_von_neumann_entropy_estimate}
		S\left(\gamma^{(k)}\right) \geq k\log N + \mathcal{O}(1)
	\end{equation}
	and
	\begin{equation}
		\label{eq:K-particle_density_2_reyni_entropy_estimate}
		S_2\left(\gamma^{(k)}\right) \geq k\log N + \mathcal{O}(1).
	\end{equation}
\end{corollary}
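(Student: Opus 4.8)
The plan is to derive both bounds directly from \cref{th:HS_estimates} using the identity linking the Hilbert--Schmidt norm to the $2$-Rényi entropy, together with the elementary comparison $S \geq S_2$.

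First I would pass from $\Gamma^{(k)}$ to $\gamma^{(k)}$: by \eqref{eq:k_prdm_trace_normalised} and self-adjointness of $\gamma^{(k)}$,
\begin{equation*}
	\Tr\left[\left(\gamma^{(k)}\right)^2\right] = \Vert\gamma^{(k)}\Vert_{\HS}^2 = {N \choose k}^{-2}\Vert\Gamma^{(k)}\Vert_{\HS}^2 \leq C_k^2\,{N \choose k}^{-2}\,N^{k},
\end{equation*}
the last step by \cref{th:HS_estimates}. Since $S_2(\gamma^{(k)}) = -\log\Tr[(\gamma^{(k)})^2]$, taking $-\log$ gives $S_2(\gamma^{(k)}) \geq 2\log{N \choose k} - k\log N - 2\log C_k$. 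Now I would insert the asymptotics ${N \choose k} = \frac{N^k}{k!}\bigl(1 + \mathcal{O}(N^{-1})\bigr)$, equivalently $\log{N \choose k} = k\log N - \log k! + \mathcal{O}(N^{-1})$ as $N \to \infty$ for fixed $k$, to obtain
\begin{equation*}
	S_2\left(\gamma^{(k)}\right) \geq k\log N - 2\log k! - 2\log C_k + \mathcal{O}(N^{-1}) = k\log N + \mathcal{O}(1),
\end{equation*}
which is \eqref{eq:K-particle_density_2_reyni_entropy_estimate}.

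For \eqref{eq:K-particle_density_von_neumann_entropy_estimate} it then suffices to show $S(\gamma^{(k)}) \geq S_2(\gamma^{(k)})$. I would do this by Jensen's inequality: diagonalising $\gamma^{(k)} = \sum_{j} \lambda_j \left|e_j\right\rangle\left\langle e_j\right|$ with $\lambda_j \geq 0$ and $\sum_j \lambda_j = 1$ (possible as $\gamma^{(k)}$ is nonnegative and trace-class with unit trace), concavity of $\log$ together with $\sum_j \lambda_j = 1$ gives
\begin{equation*}
	-S\left(\gamma^{(k)}\right) = \sum_j \lambda_j \log \lambda_j \leq \log\Bigl(\sum_j \lambda_j^2\Bigr) = -S_2\left(\gamma^{(k)}\right),
\end{equation*}
where the left-hand sum is understood in $[-\infty,0]$ and the inequality is trivial if $S(\gamma^{(k)}) = +\infty$. (Alternatively, this is the monotonicity of the Rényi entropy in its order, specialised at order $1$.) Combining with the previous paragraph yields \eqref{eq:K-particle_density_von_neumann_entropy_estimate}.

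I expect no real obstacle here: the substance lies entirely in \cref{th:HS_estimates}. The only points requiring minor care are the Stirling-type estimate for ${N \choose k}$ (with $\log k!$ and $\log C_k$ absorbed into the $\mathcal{O}(1)$ term, $k$ being fixed) and, in the infinite-dimensional case, the observation that $\gamma^{(k)}$ has purely discrete nonnegative spectrum summing to $1$, so that the entropy expressions are well defined in $[0,+\infty]$ and the Jensen comparison carries over unchanged.
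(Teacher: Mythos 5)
Your proposal is correct and follows essentially the same route as the paper: the paper likewise obtains $S(\gamma^{(k)}) \geq -\log\bigl(\Vert\gamma^{(k)}\Vert_{\HS}^2\bigr) = S_2(\gamma^{(k)})$ via Jensen's inequality and then inserts the bound of \cref{th:HS_estimates} together with $\log{N \choose k} = k\log N + \mathcal{O}(1)$. Your write-up merely makes explicit the spectral decomposition behind the Jensen step and the Stirling-type asymptotics, both of which the paper leaves implicit.
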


\begin{proof}
	As pointed out in \cite{Carlen2016EntropyEnanglement}, Jensen's inequality applied to the convex function $x\mapsto -\log(x)$ implies
	\begin{equation*}
		S\left(\gamma^{(k)}\right) \geq -\log\left(\Vert\gamma^{(k)}\Vert_\textmd{HS}^2\right),
	\end{equation*}
	which when combined with \eqref{eq:k_prdm_hs_estimate} yields \eqref{eq:K-particle_density_von_neumann_entropy_estimate}. The estimate \eqref{eq:K-particle_density_2_reyni_entropy_estimate} follows in an analogous way.
\end{proof}
		
	\section{Notation and preliminaries}
	\subsection{Antisymmetry and sign conventions}

For $\psi\in\mathfrak{H}^{\wedge N}$ and $\varphi\in\mathfrak{H}^{\wedge M}$, we define the antisymmetric tensor product $\psi\wedge\varphi\in\mathfrak{H}^{\wedge(N + M)}$ as
\begin{equation*}
	\psi\wedge\varphi(x_1,\dots,x_{N + M}) \coloneqq \dfrac{1}{\sqrt{N!M!(N + M)!}}\sum_{\sigma\in\mathcal{S}_{N + M}}\sgn(\sigma) U_\sigma(\psi\otimes\varphi),
\end{equation*}
where $\mathcal{S}_{N + M}$ denotes the group of permutations of $\{1,\dots,N + M\}$, and $U_\sigma$ denotes the permutation operator defined in \eqref{eq:permutation_operator}. Given an orthonormal family $(u_i)_{i\geq 1}$ in $\mathfrak{H}$ and a multi-index $\bfA = (a_1,\dots,a_N)$, we use the short-hand notation
\begin{equation*}
	u_{\bfA} \coloneqq u_{a_1}\wedge \dots\wedge u_{a_N},
\end{equation*}
where $a_1 < \dots < a_N$. Throughout this paper, we use bold letters such as $\bfA$ to denote multi-indices, which we always assume to be strictly ordered sequences of distinct integers. Given two multi-indices $\bfA$ and $\bfB$ of sizes $N$ and $M$, respectively, we define the union $\bfA\cup\bfB$ as the ordered multiset (allowing repetitions) of size $N + M$ that includes all elements of $\bfA$ and $\bfB$. For example, if $\bfA = (1,3)$ and $\bfB = (2,3)$, then $\bfA\cup\bfB = (1,2,3,3)$. Though we will almost exclusively use $\bfA\cup\bfB$ when $\bfA$ and $\bfB$ are disjoint, defining it more generally will simplify some notation. We also take the intersection $\bfA\cap\bfB$ and the set difference $\bfA\setminus\bfB$ to be ordered.

Now, let $\bfalpha = (\alpha_1,\dots,\alpha_s)$ and $\bfbeta = (\beta_1,\dots,\beta_t)$ be two disjoint multi-indices (with $\alpha_1 < \dots < \alpha_s$ and $\beta_1 < \dots < \beta_t$ by convention). Define $\sgn(\bfalpha,\bfbeta)$ as the sign of the permutation that orders the concatenated multi-index $(\alpha_1,\dots,\alpha_s,\beta_1,\dots,\beta_t)$. We refer to this as the \textit{relative sign} of $\bfalpha$ and $\bfbeta$. For example, if $\bfalpha = (1,3)$ and $\bfbeta = (2,4)$, then the concatenation is $(1,3,2,4)$, and it takes a single transposition $(2\; 3)$ to sort it into $(1,2,3,4)$, so $\sgn(\bfalpha,\bfbeta) = -1$. Recall that, if a permutation is expressed as a product of transpositions (or other permutations), its sign is equal to the product of the signs of its components.

Since $\bfalpha$ and $\bfbeta$ are ordered individually, we can compute $\sgn(\bfalpha,\bfbeta)$ recursively:
\begin{equation}
	\label{eq:sign_permutation_equals_product}
	\sgn(\bfalpha,\bfbeta) = \prod_{i = 1}^s\sgn(\alpha_i,\bfbeta).
\end{equation}
Let us explain this briefly. To insert $\alpha_s$ into the correct place among the elements of $\bfbeta$, we count how many elements of $\bfbeta$ are less than $\alpha_s$. That number determines the sign $\sgn(\alpha_s,\bfbeta)$, as $\alpha_s$ must be transposed past that many elements. Once $\alpha_s$ is placed, we repeat the process for $\alpha_{s -1},\alpha_{s-2},$ etc. Because the $\alpha_i$'s are ordered, we never swap them with one another, and each $\alpha_i$ is inserted independently. This recursive insertion justifies \eqref{eq:sign_permutation_equals_product}.

A direct consequence of \eqref{eq:sign_permutation_equals_product} is the identity
\begin{equation}
	\label{eq:sign_permutation_reverse_order}
	\sgn(\bfalpha,\bfbeta) = 
	\left\{
	\begin{array}{ll}
		\sgn(\bfbeta,\bfalpha) &\textmd{if $\vert\bfalpha\vert$ or $\vert\bfbeta\vert$ is even,}\\
		-\sgn(\bfbeta,\bfalpha) &\textmd{if both $\vert\bfalpha\vert$ and $\vert\bfbeta\vert$ are odd.}
	\end{array}
	\right.
\end{equation}
Indeed, it takes $t$ transpositions to move each $\alpha_i$ past the entire block $\bfbeta$ when reordering $(\bfalpha,\bfbeta)$ into $(\bfbeta,\bfalpha)$. Doing so for all $s$ elements of $\bfalpha$ requires $st$ transitions, and thus
\begin{equation*}
	\sgn(\bfalpha,\bfbeta) = (-1)^{st}\sgn(\bfbeta,\bfalpha).
\end{equation*}

Another important property of the relative sign is that, given a third multi-index $\bfdelta$ disjoint from $\bfalpha$ and $\bfbeta$, we have
\begin{equation}
	\label{eq:sign_permutation_union_equals_product}
	\sgn(\bfalpha,\bfbeta\cup\bfdelta) = \sgn(\bfalpha,\bfbeta)\sgn(\bfalpha,\bfdelta).
\end{equation}
To verify this, we apply \eqref{eq:sign_permutation_equals_product}:
\begin{equation*}
	\sgn(\bfalpha,\bfbeta\cup\bfdelta) = \prod_{i = 1}^s\sgn(\alpha_i,\bfbeta\cup\bfdelta).
\end{equation*}
Then, we use that each $\sgn(\alpha_i,\bfbeta\cup\bfdelta)$ can be factored as
\begin{equation}
	\label{eq:sign_permutation_union_equals_product_intermediate}
	\sgn(\alpha_i,\bfbeta\cup\bfdelta) = \sgn(\alpha_i,\bfbeta)\sgn(\alpha_i,\bfdelta),
\end{equation}
which we now justify. To place $\alpha_i$ into $\bfbeta\cup\bfdelta$, it must be moved past each of the elements of $\bfbeta\cup\bfdelta$ that are less than it. Since $\bfbeta\cup\bfdelta$ is ordered, this means that $\alpha_i$ must be moved past the elements of $\bfbeta$ that are less than it, and the elements of $\bfdelta$ than are less than it. Elements of $\bfbeta$ and $\bfdelta$ must however never be swapped. Thus, the total total sign $\sgn(\alpha_i,\bfbeta\cup\bfdelta)$ is the product of the signs arising from the placement of $\alpha_i$ among $\bfbeta$ and among $\bfdelta$. This proves \eqref{eq:sign_permutation_union_equals_product_intermediate}, and hence establishes \eqref{eq:sign_permutation_union_equals_product}.

\subsection{Sums over multi-indices}

Throughout the paper, we shall often take sums over multi-indices $\bfA$ or $\bfB$ of a fixed length. Given a function $f:\N^N\rightarrow\C$, we denote
\begin{equation*}
	\sum_{\vert\bfA\vert = N}f(\bfA) \ceqq \sum_{a_1 < \dots <a_N}f(a_1,\dots,a_N).
\end{equation*}
Likewise, for a function $g:\N^N\times\N^N\rightarrow\C$, we write
\begin{equation*}
	\sum_{\vert\bfA\vert,\vert\bfB\vert =N}g(\bfA,\bfB) \ceqq \sum_{\substack{a_1 < \dots < a_N\\ b_1 < \dots < b_N}}g(a_1,\dots,a_N,b_1,\dots,b_N).
\end{equation*}
We provide here three elementary results that will be used regularly throughout the proof of Theorem~\ref{th:HS_estimates}. The proofs are given in Appendix~\ref{appendix:proof_combinatorial_results}.

\begin{lemma}
	\label{lemma:rewritting_sums}
	Let $N \leq M$, and let
	\begin{equation*}
		f:\N^N\times\N^M \rightarrow \C
	\end{equation*}
	be a function such that for all $\bfA\in \N^N$ and $\bfB\in\N^M$,
	\begin{equation}
		\label{eq:rewritting_sums_index_distinct_condition}
		f(\bfA,\bfB) = 0
	\end{equation}
	whenever either $\bfA$ or $\bfB$ contains repeated indices. Then,
	\begin{equation}
		\label{eq:rewritting_sums}
		\sum_{\vert\bfA\vert = N}\sum_{\vert\bfB\vert = M}f(\bfA,\bfB) = \sum_{r = 0}^N\sum_{\vert\bfD\vert = N - r}\sum_{\vert\bfA\vert = r}\sum_{\substack{\vert\bfB\vert = M - N + r\\ \bfA\cap\bfB = \emptyset}}f(\bfD\cup\bfA,\bfD\cup\bfB).
	\end{equation}
\end{lemma}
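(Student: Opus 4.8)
The plan is to establish a size-preserving bijection between the families of indices summed over on the two sides of \eqref{eq:rewritting_sums}, and to use the vanishing hypothesis \eqref{eq:rewritting_sums_index_distinct_condition} precisely to discard the degenerate configurations that appear on the right-hand side but not on the left. Recall that, by our conventions, a multi-index always has strictly increasing (hence distinct) entries, so on the left-hand side $\bfA$ and $\bfB$ automatically have no repetitions; the subtlety is entirely on the right, where $\bfD\cup\bfA$ and $\bfD\cup\bfB$ are \emph{ordered multisets} which may a priori carry repeated indices.

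First I would simplify the right-hand side. For a fixed $r\in\{0,\dots,N\}$, the summand $f(\bfD\cup\bfA,\bfD\cup\bfB)$ vanishes by \eqref{eq:rewritting_sums_index_distinct_condition} unless the multisets $\bfD\cup\bfA$ and $\bfD\cup\bfB$ contain no repeated index. Since the sum already imposes $\bfA\cap\bfB=\emptyset$, the remaining nontrivial constraints are $\bfD\cap\bfA=\emptyset$ and $\bfD\cap\bfB=\emptyset$. Hence the right-hand side of \eqref{eq:rewritting_sums} equals
\begin{equation*}
	\sum_{r=0}^{N}\ \sum_{\substack{\vert\bfD\vert=N-r,\ \vert\bfA\vert=r,\ \vert\bfB\vert=M-N+r\\ \bfD,\,\bfA,\,\bfB\ \text{pairwise disjoint}}} f(\bfD\cup\bfA,\,\bfD\cup\bfB).
\end{equation*}

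Next I would exhibit the bijection. Given a pairwise disjoint triple $(\bfD,\bfA,\bfB)$ of sizes $(N-r,r,M-N+r)$, the ordered multisets $\bfD\cup\bfA$ and $\bfD\cup\bfB$ are genuine multi-indices (distinct entries) of sizes $N$ and $M$. Conversely, from an arbitrary pair of multi-indices $(\bfA',\bfB')$ with $\vert\bfA'\vert=N$ and $\vert\bfB'\vert=M$ one reads off $r\coloneqq\vert\bfA'\setminus\bfB'\vert$, which lies in $\{0,\dots,N\}$ because $N\le M$, together with the pairwise disjoint triple $\bfD\coloneqq\bfA'\cap\bfB'$, $\bfA\coloneqq\bfA'\setminus\bfB'$, $\bfB\coloneqq\bfB'\setminus\bfA'$, of sizes $(N-r,r,M-N+r)$. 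Since union, intersection and set difference of multi-indices were all fixed to be ordered, one has $\bfD\cup\bfA=\bfA'$ and $\bfD\cup\bfB=\bfB'$ as ordered multisets, so the two constructions are mutually inverse. Summing $f$ over both descriptions of this common index set shows that the displayed expression — hence the right-hand side of \eqref{eq:rewritting_sums} — equals $\sum_{\vert\bfA'\vert=N}\sum_{\vert\bfB'\vert=M}f(\bfA',\bfB')$, which is exactly the left-hand side.

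The argument is elementary combinatorics; the only points requiring care are the bookkeeping of orderings — verifying that $\cup$, $\cap$, $\setminus$ are compatible so that the decomposition map and its inverse are well defined on \emph{ordered} multi-indices — and the observation that hypothesis \eqref{eq:rewritting_sums_index_distinct_condition} is exactly what licenses replacing the sum over triples with $\bfA\cap\bfB=\emptyset$ by the sum over \emph{pairwise} disjoint triples. I do not anticipate a genuine obstacle.
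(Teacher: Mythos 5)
Your proof is correct and takes essentially the same route as the paper's: the paper likewise first uses the vanishing hypothesis to reduce the right-hand side to pairwise disjoint triples $(\bfD,\bfA,\bfB)$, and then invokes the bijection $(\bfA',\bfB')\mapsto(\bfA'\cap\bfB',\bfA'\setminus\bfB',\bfB'\setminus\bfA')$ with $r=\vert\bfA'\setminus\bfB'\vert$. If anything, your write-up is slightly more explicit about checking that the ordered-multiset operations are mutually inverse, so there is nothing to add.
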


\begin{lemma}
	\label{lemma:rewritting_sums_2}
	Let $N$ and $M$ be two integers, and let
	\begin{equation*}
		f: \N^N\times\N^M \rightarrow \C
	\end{equation*}
	be a function such that for all $\bfA\in\N^N$, $\bfB\in\N^M$,
	\begin{equation*}
		f(\bfA,\bfB) = 0
	\end{equation*}
	whenever $\bfA\cup\bfB$ contains repeated indices. Then,
	\begin{equation}
		\label{eq:rewritting_sums_3}
		\sum_{\vert\bfA\vert = N}\sum_{\vert\bfB\vert = M}f(\bfA,\bfB) = \sum_{\vert\bfA\vert = N + M}\sum_{\substack{\vert\bfB\vert = M\\ \bfB\subset \bfA}}f(\bfA\setminus\bfB,\bfB).
	\end{equation}
\end{lemma}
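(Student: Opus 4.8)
The plan is to prove \eqref{eq:rewritting_sums_3} by reindexing both sides so that they become sums over the same finite set of pairs of multi-indices, the bridge being an explicit bijection. The starting observation is that, since multi-indices are by convention strictly increasing (and hence have no internal repetitions), the multiset $\bfA\cup\bfB$ has a repeated entry if and only if $\bfA\cap\bfB\neq\emptyset$. Consequently the hypothesis on $f$ lets me drop from the left-hand side all terms with $\bfA\cap\bfB\neq\emptyset$, so that
\begin{equation*}
	\sum_{\vert\bfA\vert = N}\sum_{\vert\bfB\vert = M}f(\bfA,\bfB) = \sum_{\substack{\vert\bfA\vert = N,\,\vert\bfB\vert = M\\ \bfA\cap\bfB = \emptyset}}f(\bfA,\bfB).
\end{equation*}

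Next I would set up the bijection. Define $\Phi$ on the set of disjoint pairs $(\bfA,\bfB)$ with $\vert\bfA\vert = N$, $\vert\bfB\vert = M$ by $\Phi(\bfA,\bfB)\ceqq(\bfA\cup\bfB,\,\bfB)$; because $\bfA$ and $\bfB$ are disjoint, $\bfA\cup\bfB$ is a genuine multi-index of length $N + M$, and evidently $\bfB\subset\bfA\cup\bfB$, so $\Phi$ maps into the index set on the right-hand side of \eqref{eq:rewritting_sums_3}. In the other direction, define $\Psi(\mathbf{C},\bfB)\ceqq(\mathbf{C}\setminus\bfB,\,\bfB)$ for $\vert\mathbf{C}\vert = N + M$, $\vert\bfB\vert = M$, $\bfB\subset\mathbf{C}$; here $\mathbf{C}\setminus\bfB$ has length $N$ and is disjoint from $\bfB$. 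The identities $(\bfA\cup\bfB)\setminus\bfB = \bfA$ (valid when $\bfA\cap\bfB = \emptyset$) and $(\mathbf{C}\setminus\bfB)\cup\bfB = \mathbf{C}$ (valid when $\bfB\subset\mathbf{C}$) show that $\Psi = \Phi^{-1}$, so $\Phi$ is a bijection between the two index sets.

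Finally I would transport the disjoint-pair sum through $\Phi$: writing $\mathbf{C} = \bfA\cup\bfB$, the summand $f(\bfA,\bfB)$ equals $f(\mathbf{C}\setminus\bfB,\bfB)$, and as $(\bfA,\bfB)$ runs over disjoint pairs the image $(\mathbf{C},\bfB)$ runs exactly once over all pairs with $\vert\mathbf{C}\vert = N + M$ and $\bfB\subset\mathbf{C}$, which is by definition the right-hand side of \eqref{eq:rewritting_sums_3}. I do not expect a genuine obstacle here: the argument is pure bookkeeping. The closest thing to a subtlety is the equivalence ``$\bfA\cup\bfB$ has a repeated index $\iff$ $\bfA\cap\bfB\neq\emptyset$'', which is precisely where the standing convention that multi-indices are strictly ordered enters and allows the vanishing hypothesis to eliminate the non-disjoint terms.
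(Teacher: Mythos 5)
Your proof is correct and follows essentially the same route as the paper's: both reduce to disjoint pairs via the vanishing hypothesis and then apply the bijection $(\bfA,\bfB)\mapsto(\bfA\cup\bfB,\bfB)$ with inverse $(\mathbf{C},\bfB)\mapsto(\mathbf{C}\setminus\bfB,\bfB)$. You merely spell out a little more explicitly the verification that the two maps are mutually inverse and the equivalence between ``$\bfA\cup\bfB$ has a repeated index'' and ``$\bfA\cap\bfB\neq\emptyset$'', which the paper leaves implicit.
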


\begin{lemma}
	\label{lemma:rewritting_sums_3}
	Let $N$ and $M$ be two integers, and let
	\begin{equation*}
		f:\N^{N + M} \rightarrow \C
	\end{equation*}
	be a function such that for all $\bfA\in\N^{N + M}$,
	\begin{equation}
		f(\bfA) = 0
	\end{equation}
	whenever $\bfA$ contains repeated indices. Then,
	\begin{equation}
		\sum_{\vert\bfA\vert = N}\sum_{\vert\bfB\vert = M}f(\bfA\cup\bfB) = {N + M \choose N}\sum_{\vert\bfA\vert = N + M}f(\bfA).
	\end{equation}
\end{lemma}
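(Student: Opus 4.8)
The plan is to reduce the identity to the elementary combinatorial fact that a set of size $N+M$ has $\binom{N+M}{N}$ subsets of size $N$; equivalently, one can invoke Lemma~\ref{lemma:rewritting_sums_2}. I will spell out the direct argument, which I find most transparent.

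First, I would note that since $f(\bfA\cup\bfB)=0$ whenever $\bfA\cup\bfB$ has a repeated entry, only the disjoint pairs $(\bfA,\bfB)$ survive on the left-hand side, and for any such pair $\bfA\cup\bfB$ is a bona fide strictly increasing multi-index of length $N+M$. Thus the left-hand side equals $\sum f(\bfA\cup\bfB)$, the sum now restricted to ordered $\bfA$ of length $N$ and ordered $\bfB$ of length $M$ with $\bfA\cap\bfB=\emptyset$.

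Next, I would regroup this sum by the value $\bfD:=\bfA\cup\bfB$. For a fixed strictly increasing multi-index $\bfD$ of length $N+M$, the pairs $(\bfA,\bfB)$ with $|\bfA|=N$, $|\bfB|=M$, $\bfA\cap\bfB=\emptyset$ and $\bfA\cup\bfB=\bfD$ are in bijection with the $N$-element subsets $S$ of the set of entries of $\bfD$: to $S$ one associates $\bfA=$ the increasing arrangement of $S$ and $\bfB=$ the increasing arrangement of its complement in $\bfD$, and conversely $(\bfA,\bfB)$ determines $S$ as the set of entries of $\bfA$. Since there are $\binom{N+M}{N}$ such subsets and $f(\bfA\cup\bfB)=f(\bfD)$ for each of the corresponding pairs, summing over $\bfD$ yields $\binom{N+M}{N}\sum_{|\bfD|=N+M}f(\bfD)$, which is the claimed identity. (Multi-indices $\bfD$ with repeated entries contribute $0$ on both sides, so there is no harm in taking the outer sum over all $\bfD$.)

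I expect no genuine obstacle; the only thing to be careful about is that the ordering conventions make the displayed correspondence an actual bijection — which is immediate because each multi-index is, by convention, the unique increasing listing of its underlying set. As an alternative route, applying Lemma~\ref{lemma:rewritting_sums_2} to $g(\bfA,\bfB):=f(\bfA\cup\bfB)$ (which vanishes whenever $\bfA\cup\bfB$ has a repeated index, since $f$ does) writes the left-hand side as $\sum_{|\bfD|=N+M}\sum_{\bfB\subset\bfD,\,|\bfB|=M} f\bigl((\bfD\setminus\bfB)\cup\bfB\bigr)$, and since $(\bfD\setminus\bfB)\cup\bfB=\bfD$ when $\bfB\subset\bfD$ and $\bfD$ has $\binom{N+M}{M}=\binom{N+M}{N}$ subsets of size $M$, the same conclusion follows at once.
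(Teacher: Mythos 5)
Your proof is correct and matches the paper's: the paper's own argument is exactly your ``alternative route'' via Lemma~\ref{lemma:rewritting_sums_2}, and your direct bijective regrouping by $\bfD=\bfA\cup\bfB$ is just that lemma's content made explicit. No gaps.
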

	
	\section{Expansion into Slater determinants and main estimate}
	\subsection{Rewriting of the Hilbert--Schmidt norm}

The first step in the proof of Theorem~\ref{th:HS_estimates} is the following rewriting of the Hilbert--Schmidt norm of $\Gamma^{(k)}$. 

\begin{lemma}
	\label{lemma:rewriting_hs_norm}
	Let $\Psi\in\mfH^{\wedge N}$ be a normalised state. Let $(u_i)_{i \geq 1}$ be an orthonormal basis of $\mfH$, and expand $\Psi$ into Slater determinants built from $(u_i)_{i\geq 1}$:
	\begin{equation}
		\label{eq:wavefunction_expanded_basis_slater}
		\Psi = \sum_{\vert \bfA\vert = N}c_{\bfA}u_{\bfA},
	\end{equation}
	where
	\begin{equation}
		\label{eq:convention_coef_slater_expansion}
		c_{\bfA} = 0 \textmd{ if $\bfA$ contains the same index more than once.}
	\end{equation}
	Define
	\begin{equation}
		\label{eq:function_lamba_def}
		\Lambda(\bfD;\bfalpha,\bfbeta;\bfeps,\bfeta) = \sgn(\bfalpha\cup\bfbeta,\bfeps\cup\bfeta)c_{\bfD\cup\bfalpha\cup\bfeps}\overline{c_{\bfD\cup\bfalpha\cup\bfeta}c_{\bfD\cup\bfbeta\cup\bfeps}}c_{\bfD\cup\bfbeta\cup\bfeta},
	\end{equation}
	with the convention that $\Lambda(\bfD;\bfalpha,\bfbeta;\bfeps,\bfeta) = 0$ unless all involved multi-indices are pairwise disjoint. Set
	\begin{equation*}
		\Lambda(\bfD;\bfeps,\bfeta) = \Lambda(\bfD;\emptyset,\emptyset;\bfeps,\bfeta).
	\end{equation*}
	Then, the Hilbert--Schmidt norm of $\Gamma^{(k)}$ can be written as
	\begin{equation}
		\label{eq:k_prdm_hilbert_schmidt_norm_rewritten}
		\Vert\Gamma^{(k)}\Vert_{\HS}^2\\
		= \sum_{s = 0}^k\sum_{r = 0}^N{N - r \choose k - s}\sum_{\vert \bfD\vert = N - r}\sum_{\vert\bfeps\vert, \vert\bfeta\vert = r - s}\sum_{\vert\bfalpha\vert, \vert\bfbeta\vert = s}\Lambda(\bfD;\bfalpha,\bfbeta;\bfeps,\bfeta).
	\end{equation}
\end{lemma}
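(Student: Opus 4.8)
The plan is to compute $\Vert\Gamma^{(k)}\Vert_{\HS}^{2}=\sum_{\vert\mathbf{P}\vert=k}\sum_{\vert\mathbf{Q}\vert=k}\vert\langle u_{\mathbf{P}},\Gamma^{(k)}u_{\mathbf{Q}}\rangle\vert^{2}$ in the orthonormal basis $\{u_{\mathbf{P}}:\vert\mathbf{P}\vert=k\}$ of $\mfH^{\wedge k}$ and to identify the result with the right-hand side of \eqref{eq:k_prdm_hilbert_schmidt_norm_rewritten} after re-indexing. The first ingredient is the matrix element of $\Gamma^{(k)}$ in this basis. Using that $u_{\mathbf{P}}\wedge u_{\mathbf{R}}=\sgn(\mathbf{P},\mathbf{R})\,u_{\mathbf{P}\cup\mathbf{R}}$ for disjoint multi-indices, a direct computation from the definition \eqref{eq:k_prdm_def} of the partial trace yields
\[
\langle u_{\mathbf{P}},\Gamma^{(k)}u_{\mathbf{Q}}\rangle=\sum_{\vert\mathbf{R}\vert=N-k}\sgn(\mathbf{P},\mathbf{R})\,\sgn(\mathbf{Q},\mathbf{R})\,c_{\mathbf{P}\cup\mathbf{R}}\,\overline{c_{\mathbf{Q}\cup\mathbf{R}}},
\]
the sum being effectively restricted, via \eqref{eq:convention_coef_slater_expansion}, to $\mathbf{R}$ disjoint from $\mathbf{P}$ and $\mathbf{Q}$. (Consistency checks: $k=0$ returns $\Vert\Psi\Vert^{2}=1$, and taking the trace and applying \cref{lemma:rewritting_sums_3} recovers \eqref{eq:k_prdm_trace}.) Squaring gives
\[
\Vert\Gamma^{(k)}\Vert_{\HS}^{2}=\sum_{\substack{\vert\mathbf{P}\vert,\vert\mathbf{Q}\vert=k\\\vert\mathbf{R}\vert,\vert\mathbf{R}'\vert=N-k}}\sgn(\mathbf{P},\mathbf{R})\sgn(\mathbf{Q},\mathbf{R})\sgn(\mathbf{P},\mathbf{R}')\sgn(\mathbf{Q},\mathbf{R}')\,c_{\mathbf{P}\cup\mathbf{R}}\,\overline{c_{\mathbf{Q}\cup\mathbf{R}}\,c_{\mathbf{P}\cup\mathbf{R}'}}\,c_{\mathbf{Q}\cup\mathbf{R}'},
\]
a fourfold sum that converges absolutely: by Cauchy--Schwarz in the $\mathbf{R}$- and $\mathbf{R}'$-sums, the sum of the absolute values of the summands is at most $\big(\sum_{\vert\mathbf{P}\vert=k}\langle u_{\mathbf{P}},\Gamma^{(k)}u_{\mathbf{P}}\rangle\big)^{2}=\binom{N}{k}^{2}$. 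Hence every rearrangement below is legitimate.

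Next I would re-index. Apply \cref{lemma:rewritting_sums} with $N=M=k$ to the $(\mathbf{P},\mathbf{Q})$-sum: this is exactly the substitution $\mathbf{P}=\mathbf{D}_{1}\cup\bfalpha$, $\mathbf{Q}=\mathbf{D}_{1}\cup\bfbeta$, with $\mathbf{D}_{1}=\mathbf{P}\cap\mathbf{Q}$ of size $k-s$ and $\bfalpha=\mathbf{P}\setminus\mathbf{Q}$, $\bfbeta=\mathbf{Q}\setminus\mathbf{P}$ disjoint of size $s$. Then apply \cref{lemma:rewritting_sums} with $N=M=N-k$ to the $(\mathbf{R},\mathbf{R}')$-sum, substituting $\mathbf{R}=\mathbf{D}_{2}\cup\bfeps$, $\mathbf{R}'=\mathbf{D}_{2}\cup\bfeta$, with $\mathbf{D}_{2}=\mathbf{R}\cap\mathbf{R}'$ of size $N-k-t$ and $\bfeps,\bfeta$ disjoint of size $t$. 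Writing $\bfD=\mathbf{D}_{1}\cup\mathbf{D}_{2}$, the four coefficients become $c_{\bfD\cup\bfalpha\cup\bfeps}$, $\overline{c_{\bfD\cup\bfbeta\cup\bfeps}\,c_{\bfD\cup\bfalpha\cup\bfeta}}$, $c_{\bfD\cup\bfbeta\cup\bfeta}$, which is the coefficient product of \eqref{eq:function_lamba_def}; and, by the sign computation below, the remaining summand depends on $\mathbf{D}_{1},\mathbf{D}_{2}$ only through $\bfD$. Collapsing the $\mathbf{D}_{1}$- and $\mathbf{D}_{2}$-sums by \cref{lemma:rewritting_sums_3} then produces the multiplicity $\binom{(k-s)+(N-k-t)}{k-s}=\binom{N-s-t}{k-s}$, and relabelling $r:=s+t$ gives $\vert\bfD\vert=N-r$, $\vert\bfeps\vert=\vert\bfeta\vert=r-s$, and the prefactor $\binom{N-r}{k-s}$. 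The constraint $0\le t\le N-k$ forces $s\le r\le s+N-k$, but outside this range $\binom{N-r}{k-s}=0$ (for $r<s$ there is no admissible $t$; for $r>s+N-k$ one has $N-r<k-s$), so the $r$-sum may be freely extended to $0\le r\le N$ as in the statement. All invocations of \cref{lemma:rewritting_sums,lemma:rewritting_sums_3} are legitimate because each function being re-summed carries a coefficient $c_{\bullet}$ that vanishes whenever its index has a repetition, by \eqref{eq:convention_coef_slater_expansion}.

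The step I expect to be the main obstacle is the sign bookkeeping. One must check that, in the new variables,
\[
\sgn(\mathbf{P},\mathbf{R})\,\sgn(\mathbf{Q},\mathbf{R})\,\sgn(\mathbf{P},\mathbf{R}')\,\sgn(\mathbf{Q},\mathbf{R}')=\sgn(\bfalpha\cup\bfbeta,\bfeps\cup\bfeta),
\]
and --- what makes the collapse by \cref{lemma:rewritting_sums_3} valid --- that the left-hand side does not depend on how $\bfD$ is partitioned into $\mathbf{D}_{1}$ and $\mathbf{D}_{2}$. Using \eqref{eq:sign_permutation_union_equals_product} in the second argument together with the analogous factorisation in the first argument (both immediate from \eqref{eq:sign_permutation_equals_product}), each of the four signs decomposes, e.g.
\[
\sgn(\mathbf{P},\mathbf{R})=\sgn(\mathbf{D}_{1},\mathbf{D}_{2})\,\sgn(\mathbf{D}_{1},\bfeps)\,\sgn(\bfalpha,\mathbf{D}_{2})\,\sgn(\bfalpha,\bfeps),
\]
and likewise for the other three with $\bfbeta$ and/or $\bfeta$ in place of $\bfalpha$ and/or $\bfeps$. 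In the product of all four, the factors $\sgn(\mathbf{D}_{1},\mathbf{D}_{2})$, $\sgn(\mathbf{D}_{1},\bfeps)$, $\sgn(\mathbf{D}_{1},\bfeta)$, $\sgn(\bfalpha,\mathbf{D}_{2})$ and $\sgn(\bfbeta,\mathbf{D}_{2})$ each occur an even number of times and hence square to $1$, leaving exactly $\sgn(\bfalpha,\bfeps)\sgn(\bfalpha,\bfeta)\sgn(\bfbeta,\bfeps)\sgn(\bfbeta,\bfeta)=\sgn(\bfalpha\cup\bfbeta,\bfeps\cup\bfeta)$ by \eqref{eq:sign_permutation_union_equals_product} --- manifestly independent of $\mathbf{D}_{1}$ and $\mathbf{D}_{2}$. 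With the sign thus identified, the summand is precisely $\Lambda(\bfD;\bfalpha,\bfbeta;\bfeps,\bfeta)$, and \eqref{eq:k_prdm_hilbert_schmidt_norm_rewritten} follows.
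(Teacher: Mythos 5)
Your proposal is correct and follows essentially the same route as the paper: expand $\Gamma^{(k)}$ in the Slater basis, re-index the resulting fourfold sum by applying \cref{lemma:rewritting_sums} to both the $(\mathbf{P},\mathbf{Q})$- and $(\mathbf{R},\mathbf{R}')$-pairs, check that the product of the four relative signs reduces to $\sgn(\bfalpha\cup\bfbeta,\bfeps\cup\bfeta)$ independently of how $\bfD$ splits into $\mathbf{D}_1\cup\mathbf{D}_2$, and collapse the two common blocks via \cref{lemma:rewritting_sums_3} to produce the factor ${N-r \choose k-s}$. The only point where you are lighter than the paper is the matrix-element formula $\langle u_{\mathbf{P}},\Gamma^{(k)}u_{\mathbf{Q}}\rangle=\sum_{\vert\mathbf{R}\vert=N-k}\sgn(\mathbf{P},\mathbf{R})\sgn(\mathbf{Q},\mathbf{R})\,c_{\mathbf{P}\cup\mathbf{R}}\,\overline{c_{\mathbf{Q}\cup\mathbf{R}}}$, which you assert as a direct computation whereas the paper spends half its proof deriving it (tracking the permutation sums and the $k!(N-k)!/N!$ factor that cancels ${N \choose k}$) --- but the formula is correct and your consistency checks confirm it, so this is a matter of omitted routine detail rather than a gap.
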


Let us briefly comment on the expression \eqref{eq:k_prdm_hilbert_schmidt_norm_rewritten} before providing its proof. By expanding $\Psi$ as in \eqref{eq:wavefunction_expanded_basis_slater}, we get
\begin{equation*}
	\Vert\Gamma^{(k)}\Vert_{\HS}^2 = \sum_{\substack{\bfA,\bfB\\ \bfA',\bfB'}}\overline{c_\bfA}c_\bfB c_{\bfA'}\overline{c_{\bfB'}}\Tr\big(\Tr_{k+1\rightarrow N}\vert u_\bfA\rangle\langle u_\bfB\vert\Tr_{k+1\rightarrow N}\vert u_{\bfB'}\rangle\langle u_{\bfA'}\vert\big).
\end{equation*}
Then, $\bfA,\bfB,\bfA'$ and $\bfB'$ can be related to the indices in \eqref{eq:k_prdm_hilbert_schmidt_norm_rewritten} as follows (illustrated in Figure~\ref{fig:illustration_hs_norm_decomposition}):
\begin{itemize}[label=-]
	\item $\bfD$ is the set of indices that $\bfA,\bfB,\bfA'$ and $\bfB'$ all have in common;
	\item $\bfeps$ is the set of indices that only $\bfA$ and $\bfB$ have in common;
	\item $\bfeta$ is the set of indices that only $\bfA'$ and $\bfB'$ have in common;
	\item $\bfalpha$ is the set of indices that only $\bfA$ and $\bfA'$ have in common;
	\item $\bfbeta$ is the set of indices that only $\bfB'$ and $\bfB'$ have in common.
\end{itemize}

\begin{figure}[ht]
	\centering
	\hfill
	\begin{tikzpicture}
	\draw[very thick, black] (0,0) -- (4,0);
	\foreach \i in {0,...,8}
	{
		\draw[thick,gray] (\i*0.5,-0.1)--(\i*0.5,0.1);
	}

	\draw[very thick, black] (0,-0.5) -- (1.5,-0.5);
	\draw[very thick, black] (2.5,-0.5) -- (5,-0.5);
	\foreach \i in {0,...,3}
	{
		\draw[thick,gray] (\i*0.5,-0.5 -0.1)--(\i*0.5,-0.5 + 0.1);
	}
	\foreach \i in {0,...,5}
	{
		\draw[thick,gray] (\i*0.5 + 2.5,-0.5 -0.1)--(\i*0.5 + 2.5,-0.5 + 0.1);
	}

	\draw[very thick, black] (0,-1) -- (2.5,-1);
	\foreach \i in {0,...,5}
	{
		\draw[thick,gray] (\i*0.5,-1 -0.1)--(\i*0.5,-1 + 0.1);
	}
	\draw[very thick, black] (5,-1) -- (6.5,-1);
	\foreach \i in {0,...,3}
	{
		\draw[thick,gray] (\i*0.5 + 5,-1 -0.1)--(\i*0.5 + 5,-1 + 0.1);
	}

	\draw[very thick, black] (0,-1.5) -- (1.5,-1.5);
	\foreach \i in {0,...,3}
	{
		\draw[thick,gray] (\i*0.5,-1.5 -0.1)--(\i*0.5,-1.5 + 0.1);
	}
	\draw[very thick, black] (4,-1.5) -- (6.5,-1.5);
	\foreach \i in {0,...,5}
	{
		\draw[thick,gray] (\i*0.5 + 4,-1.5 -0.1)--(\i*0.5 + 4,-1.5 + 0.1);
	}

	\node at (7,0) {\large $\mathbf{A}$};
	\node at (7,-0.5) {\large $\mathbf{B}$};
	\node at (7,-1) {\large $\mathbf{A}'$};
	\node at (7,-1.5) {\large $\mathbf{B}'$};
\end{tikzpicture}
	\hfill
	\begin{tikzpicture}
	\draw[very thick, black] (0,0) -- (1.5,0);
	\draw[very thick, Cerulean] (1.5,0) -- (2.5,0);
	\draw[very thick, Green] (2.5,0) -- (4,0);
	\foreach \i in {0,...,8}
	{
		\draw[thick,gray] (\i*0.5,-0.1)--(\i*0.5,0.1);
	}

	\draw[very thick, black] (0,-0.5) -- (1.5,-0.5);
	\draw[very thick, Green] (2.5,-0.5) -- (4,-0.5);
	\draw[very thick, RedOrange] (4,-0.5) -- (5,-0.5);
	\foreach \i in {0,...,3}
	{
		\draw[thick,gray] (\i*0.5,-0.5 -0.1)--(\i*0.5,-0.5 + 0.1);
	}
	\foreach \i in {0,...,5}
	{
		\draw[thick,gray] (\i*0.5 + 2.5,-0.5 -0.1)--(\i*0.5 + 2.5,-0.5 + 0.1);
	}

	\draw[very thick, black] (0,-1) -- (1.5,-1);
	\draw[very thick, Cerulean] (1.5,-1) -- (2.5,-1);
	\foreach \i in {0,...,5}
	{
		\draw[thick,gray] (\i*0.5,-1 -0.1)--(\i*0.5,-1 + 0.1);
	}
	\draw[very thick, Dandelion] (5,-1) -- (6.5,-1);
	\foreach \i in {0,...,3}
	{
		\draw[thick,gray] (\i*0.5 + 5,-1 -0.1)--(\i*0.5 + 5,-1 + 0.1);
	}

	\draw[very thick, black] (0,-1.5) -- (1.5,-1.5);
	\foreach \i in {0,...,3}
	{
		\draw[thick,gray] (\i*0.5,-1.5 -0.1)--(\i*0.5,-1.5 + 0.1);
	}
	\draw[very thick, RedOrange] (4,-1.5) -- (5,-1.5);
	\draw[very thick, Dandelion] (5,-1.5) -- (6.5,-1.5);
	\foreach \i in {0,...,5}
	{
		\draw[thick,gray] (\i*0.5 + 4,-1.5 -0.1)--(\i*0.5 + 4,-1.5 + 0.1);
	}

	\node at (7,0) {\large $\mathbf{A}$};
	\node at (7,-0.5) {\large $\mathbf{B}$};
	\node at (7,-1) {\large $\mathbf{A}'$};
	\node at (7,-1.5) {\large $\mathbf{B}'$};
\end{tikzpicture}
	\caption{The multi-indices in \eqref{eq:k_prdm_hilbert_schmidt_norm_rewritten} are colour coded in the right-hand side as follows: $\mathbf{D}$, ${\color{Cerulean}\bfalpha}$, ${\color{RedOrange}\bfbeta}$, ${\color{Green}\bfeps}$, ${\color{Dandelion}\bfeta}$.}
	\label{fig:illustration_hs_norm_decomposition}
\end{figure}

\begin{proof}[Proof of Lemma~\ref{lemma:rewriting_hs_norm}]
	As a consequence of \eqref{eq:wavefunction_expanded_basis_slater}, the $k$-particle reduced density matrix of $\Psi$ is given by
	\begin{equation}
		\label{eq:k_prdm_expanded_slater}
		\Gamma^{(k)} = {N \choose k}\sum_{\vert\bfA\vert, \vert\bfB\vert = N}c_{\bfA}\overline{c_{\bfB}}\Tr_{k+1\rightarrow N}\left\vert u_{\bfA}\right\rangle\left\langle u_{\bfB}\right\vert.
	\end{equation}
	Let us rewrite $\vert u_{\bfA}\rangle\langle u_{\bfB}\vert$. Since $\{u_{\bfalpha}: \vert \bfalpha\vert = k\}$ is an orthonormal basis of $\mathfrak{H}^{\wedge k}$, we have
	\begin{equation*}
		\sum_{\vert\bfalpha\vert = k}\left\vert u_{\bfalpha}\right\rangle\left\langle u_{\bfalpha}\right\vert = \mathds{1}
	\end{equation*}
	and, thus, we can write
	\begin{equation*}
		\Tr_{k + 1 \rightarrow N}\left\vert u_{\bfA}\right\rangle\left\langle u_{\bfB}\right\vert = \sum_{\substack{\vert\bfalpha\vert, \vert\bfbeta\vert = k}}\left\vert u_{\bfalpha}\right\rangle\left\langle u_{\bfalpha}\right\vert \left(\Tr_{k + 1 \rightarrow N}\left\vert u_{\bfA}\right\rangle\left\langle u_{\bfB}\right\vert\right) \left\vert u_{\bfbeta}\right\rangle\left\langle u_{\bfbeta}\right\vert.
	\end{equation*}
	By the orthonormality of $(u_i)_{i\geq 1}$, only the terms with $\bfalpha\subset \bfA$ and $\bfbeta \subset \bfB$ can be nonzero. In that case, by definition of the antisymmetric tensor product and of the relative sign, we have
	\begin{equation*}
		u_{\bfA} = \sgn(\bfalpha,\bfA\setminus\bfalpha)u_{\bfalpha}\wedge u_{\bfA\setminus\bfalpha} \quad \textmd{and} \quad u_{\bfB} = \sgn(\bfbeta,\bfB\setminus\bfbeta)u_{\bfbeta}\wedge u_{\bfB\setminus\bfbeta}.
	\end{equation*}
	Hence,
	\begin{multline*}
		\Tr_{k + 1 \rightarrow N}\left\vert u_{\bfA}\right\rangle\left\langle u_{\bfB}\right\vert = \dfrac{1}{k!(N - k)!N!}\sum_{\substack{\vert\bfalpha\vert, \vert\bfbeta\vert = k\\ \bfalpha \subset \bfA, \bfbeta \subset \bfB}}\sgn(\bfalpha,\bfA\setminus\bfalpha)\sgn(\bfbeta,\bfB\setminus\bfbeta)\sum_{\sigma,\pi\in\mathcal{S}_N}\sgn(\sigma)\sgn(\pi)\\
		\times\left\vert u_{\bfalpha}\right\rangle\left\langle u_{\bfalpha}\right\vert \left(\Tr_{k + 1 \rightarrow N}\left\vert U_{\sigma}(u_{\bfalpha}\otimes u_{\bfA\setminus\bfalpha})\right\rangle\left\langle U_{\pi}(u_{\bfbeta}\otimes u_{\bfB\setminus\bfbeta})\right\vert\right) \left\vert u_{\bfbeta}\right\rangle\left\langle u_{\bfbeta}\right\vert.
	\end{multline*}
	Again, due to the orthonormality of the $u_i$'s, the terms in the right-hand side can only be nonzero if
	\begin{equation*}
		\sigma(\{1,\dots,k\}) = \{1,\dots,k\}, \quad \pi(\{1,\dots,k\}) = \{1,\dots,k\} \quad \textmd{and} \quad \bfA\setminus\bfalpha = \bfB\setminus\bfbeta.
	\end{equation*}
	Furthermore, such $\sigma$'s and $\pi$'s satisfy
	\begin{equation*}
		U_{\sigma}(u_{\bfalpha}\otimes u_{\bfA\setminus\bfalpha}) = \sgn(\sigma)u_{\bfalpha}\otimes u_{\bfA\setminus\bfalpha} \quad \textmd{and} \quad U_{\pi}(u_{\bfbeta}\otimes u_{\bfB\setminus\bfbeta}) = \sgn(\pi)u_{\bfbeta}\otimes u_{\bfB\setminus\bfbeta}.
	\end{equation*}
	This finally implies
	\begin{equation}
		\label{eq:trace_slater_rewritten}
		\Tr_{k + 1 \rightarrow N}\left\vert u_{\bfA}\right\rangle\left\langle u_{\bfB}\right\vert = \dfrac{k!(N - k)!}{N!}\sum_{\substack{\vert\bfalpha\vert, \vert\bfbeta\vert = k\\ \bfalpha \subset \bfA, \bfbeta \subset \bfB\\ \bfA\setminus\bfalpha = \bfB\setminus\bfbeta}}\sgn(\bfalpha,\bfA\setminus\bfalpha)\sgn(\bfbeta,\bfB\setminus\bfbeta)\left\vert u_{\bfalpha}\right\rangle\left\langle u_{\bfbeta}\right\vert.
	\end{equation}
	
	Injecting \eqref{eq:trace_slater_rewritten} into \eqref{eq:k_prdm_expanded_slater} yields
	\begin{align*}
		\Gamma^{(k)} &= \sum_{\vert\bfA\vert, \vert\bfB\vert = N}c_{\bfA}\overline{c_{\bfB}}\sum_{\substack{\vert\bfalpha\vert, \vert\bfbeta\vert = k\\ \bfalpha \subset \bfA, \bfbeta \subset \bfB\\ \bfA\setminus\bfalpha = \bfB\setminus\bfbeta}}\sgn(\bfalpha,\bfA\setminus\bfalpha)\sgn(\bfbeta,\bfB\setminus\bfbeta)\left\vert u_{\bfalpha}\right\rangle\left\langle u_{\bfbeta}\right\vert\\
		&= \sum_{\vert\bfalpha\vert,\vert\bfbeta\vert=k}\bigg(\sum_{\substack{\vert\bfA\vert,\vert\bfB\vert = N\\ \bfalpha \subset \bfA, \bfbeta\subset \bfB\\ \bfA\setminus\bfalpha = \bfB\setminus\bfbeta}}\sgn(\bfalpha,\bfA\setminus\bfalpha)\sgn(\bfbeta,\bfB\setminus\bfbeta)c_{\bfA}\overline{c_{\bfB}}\bigg)\left\vert u_{\bfalpha}\right\rangle\left\langle u_{\bfbeta}\right\vert.
	\end{align*}
	Thanks to Lemma~\ref{lemma:rewritting_sums_2}, we have, at fixed $\bfalpha, \bfbeta$ and $\bfB$,
	\begin{equation*}
		\sum_{\substack{\vert\bfA\vert = N\\ \bfalpha\subset\bfA\\ \bfA\setminus\bfalpha = \bfB\setminus\bfbeta}}\sgn(\bfalpha,\bfA\setminus\bfalpha)c_\bfA = \sum_{\substack{\vert\bfA\vert = N - k\\ \bfA = \bfB\setminus\bfbeta}}\sgn(\bfalpha,\bfA)c_{\bfA\cup\bfalpha}.
	\end{equation*}
	Let us stress on the fact that we are summing over $\bfA$'s that are disjoint from $\bfalpha$ on the right-hand side. We do not write this explicitly because it is enforced by the convention \eqref{eq:convention_coef_slater_expansion}. Doing the same thing for $\bfB$ and $\bfbeta$, we obtain
	\begin{equation}
		\label{eq:k_prdm_rewritten}
		\Gamma^{(k)} = \sum_{\vert\boldsymbol{A}\vert = N - k}\sum_{\vert\bfalpha\vert, \vert\bfbeta\vert = k}\sgn(\bfalpha,\mathbf{A})\sgn(\bfbeta,\mathbf{A})c_{\mathbf{A}\cup\bfalpha}\overline{c_{\mathbf{A}\cup\bfbeta}}\left\vert u_{\bfalpha}\right\rangle\left\langle u_{\bfbeta}\right\vert.
	\end{equation}
	Here, $\bfA$ is disjoint from both $\bfalpha$ and $\bfbeta$, though the latter two are not necessarily disjoint from one another.
	
	From \eqref{eq:k_prdm_rewritten}, we deduce
	\begin{equation*}
		\Vert\Gamma^{(k)}\Vert_{\textmd{HS}}^2 = \sum_{\vert\mathbf{A}\vert, \vert\mathbf{B}\vert = N - k}\sum_{\vert\bfalpha\vert, \vert\bfbeta\vert = k}
		\sgn(\bfalpha,\mathbf{A})\sgn(\bfbeta,\mathbf{A})\sgn(\bfalpha,\mathbf{B})\sgn(\bfbeta,\mathbf{B})c_{\mathbf{A}\cup\bfalpha}\overline{c_{\mathbf{A}\cup\bfbeta}c_{\mathbf{B}\cup\bfalpha}}c_{\mathbf{B}\cup\bfbeta}.
	\end{equation*}
	Applying Lemma~\ref{lemma:rewritting_sums} to the sum over $(\bfalpha,\bfbeta)$, and then applying it again to the sum over $(\bfA,\bfB)$, we obtain
	\begin{multline*}
		\Vert\Gamma^{(k)}\Vert_{\textmd{HS}}^2 = \sum_{r = 0}^{N - k}\sum_{s = 0}^k\sum_{\vert\bfD\vert = N - k - r}\sum_{\vert\bfD'\vert = k - s}\sum_{\substack{\vert\bfeps\vert,\vert\bfeta\vert = r\\ \bfeps\cap\bfeta = \emptyset}}\sum_{\substack{\vert\bfalpha\vert,\vert\bfbeta\vert = s\\ \bfalpha\cap\bfbeta = \emptyset}}\sgn(\bfD\cup\bfalpha,\bfD'\cup\bfeps)\sgn(\bfD\cup\bfbeta,\bfD'\cup\bfeps)\\
		\times\sgn(\bfD\cup\bfalpha,\bfD'\cup\bfeta)\sgn(\bfD\cup\bfbeta,\bfD'\cup\bfeta)c_{\bfD\cup\bfD'\cup\bfalpha\cup\bfeps}\overline{c_{\bfD\cup\bfD'\cup\bfalpha\cup\bfeta}c_{\bfD\cup\bfD'\cup\bfbeta\cup\bfeps}}c_{\bfD\cup\bfD'\cup\bfbeta\cup\bfeta}.
	\end{multline*}
	This can however directly be written in a simpler form. In fact, thanks to \eqref{eq:sign_permutation_union_equals_product}, we have
	\begin{multline*}
		\sgn(\bfD\cup\bfalpha,\bfD'\cup\bfeps)\sgn(\bfD\cup\bfbeta,\bfD'\cup\bfeps)\sgn(\bfD\cup\bfalpha,\bfD'\cup\bfeta)\sgn(\bfD\cup\bfbeta,\bfD'\cup\bfeta)\\
		\begin{aligned}[t]
			&= \sgn(\bfalpha,\bfeps)\sgn(\bfbeta,\bfeps)\sgn(\bfalpha,\bfeta)\sgn(\bfbeta,\bfeta)\\
			&= \sgn(\bfalpha\cup\bfbeta,\bfeps\cup\bfeta),
		\end{aligned}
	\end{multline*}
	and thus
	\begin{equation*}
		\Vert\Gamma^{(k)}\Vert_{\textmd{HS}}^2 = \sum_{r = 0}^{N - k}\sum_{s = 0}^k\sum_{\vert\bfD\vert = N - k - r}\sum_{\vert\bfD'\vert = k - s}\sum_{\vert\bfeps\vert,\vert\bfeta\vert = r}\sum_{\vert\bfalpha\vert,\vert\bfbeta\vert = s}\Lambda(\bfD\cup\bfD';\bfalpha,\bfbeta;\bfeps,\bfeta).
	\end{equation*}
	As a result of the convention taken on $\Lambda$, the six multi-indices involved in the previous expression must be pairwise disjoint. Finally, an application of Lemma~\ref{lemma:rewritting_sums_3} implies
	\begin{equation*}
		\Vert\Gamma^{(k)}\Vert_{\textmd{HS}}^2
		= \sum_{s = 0}^k\sum_{r = 0}^{N - k}{N - r - s \choose k - s}\sum_{\vert \bfD\vert = N - r - s}\sum_{\vert\bfeps\vert, \vert\bfeta\vert = r}\sum_{\vert\bfalpha\vert, \vert\bfbeta\vert = s}\Lambda(\bfD;\bfalpha,\bfbeta;\bfeps,\bfeta),
	\end{equation*}
	which, by a change of variables, concludes the proof of Lemma~\ref{lemma:rewriting_hs_norm}.
\end{proof}

\subsection{Main estimate}

\label{section:cancellation_main_order}

Before we present the main estimate of the paper, we explain the idea behind it. In an ideal world, we would be able to prove a bound such as
\begin{equation}
	\label{eq:main_estimate_ideal_world}
	\sum_{\vert\bfeps\vert,\vert\bfeta\vert = r - s}\sum_{\vert\bfalpha\vert,\vert\bfbeta\vert = s}\Lambda(\bfD;\bfalpha,\bfbeta;\bfeps,\bfeta) \leq {r \choose s}\sum_{\vert\bfeps\vert,\vert\bfeta\vert = r}\Lambda(\bfD;\bfeps,\bfeta),
\end{equation}
which, by the Vandermonde identity
\begin{equation*}
	\sum_{s = 0}^k{N - r \choose k - s}{r \choose s} = {N \choose k},
\end{equation*}
would imply the optimal estimate
\begin{equation*}
	\Vert\Gamma^{(k)}\Vert_{\HS}^2 \leq {N \choose k} \sum_{r = 0}^N\sum_{\vert\bfD\vert = N - r}\sum_{\vert\bfeps\vert,\vert\bfeta\vert = r}\Lambda(\bfD;\bfeps,\bfeta) = {N \choose k}.
\end{equation*}
Although the very strong bound \eqref{eq:main_estimate_ideal_world} holds for $s \leq 1$ (it is trivial for $s = 0 $ and follows from Lemma~\ref{lemma:main_estimate_odd_case} for $s = 1$), it unclear that such a bound should be true for $s \geq 2$. Moreover, even if such a bound were to hold it would presumably be very difficult to prove. Instead, we roughly speaking show that each term of the form
\begin{equation*}
	\sum_{\vert\bfD\vert = N - r}\sum_{\vert\bfeps\vert,\vert\bfeta\vert = r - s}\sum_{\vert\bfalpha\vert,\vert\bfbeta\vert = s}\Lambda(\bfD;\bfalpha,\bfbeta;\bfeps,\bfeta)
\end{equation*}
contributes at most $\mathcal{O}(N^s)$. Since these terms are weighted by combinatorial factors of order $\mathcal{O}(N^{k - s})$, we obtain an overall contribution of order $\mathcal{O}(N^k)$, as desired.

\begin{proposition}
	\label{prop:cancellation_main_order}
	Let $t$ be a nonnegative integer. Then, there exists a family of real coefficients
	\begin{equation*}
		\left\{C_{s,t}(r,N): s\in\{0,\dots,t\}, 0\leq r \leq N\right\}
	\end{equation*}
	and a nonnegative constant $C_t$ depending only on $t$ and satisfying
	\begin{equation}
		\label{eq:cancellation_main_order_condition_coefficients}
		\left\vert C_{s,t}(r,N)\right\vert \leq C_tN^{t - s},
	\end{equation}
	for all $0\leq r \leq N$, and such that, for all normalised $\Psi\in\mfH^{\wedge N}$ expanded as in Lemma~\ref{lemma:rewriting_hs_norm}, the following estimate holds:
	\begin{multline}
		\label{eq:cancellation_main_order}
		\sum_{r = 0}^{N}\sum_{\vert\bfD\vert = N - r - t}\sum_{\vert\bfeps\vert, \vert\bfeta\vert = r}\sum_{\vert\bfalpha\vert, \vert\bfbeta\vert = t}\Lambda(\bfD;\bfalpha,\bfbeta;\bfeps,\bfeta)\\
		\leq \sum_{s = 0}^{t - 1}\sum_{r = 0}^{N}C_{s,t}(r,N)\sum_{\vert\bfD\vert = N - r - s}\sum_{\vert\bfeps\vert, \vert\bfeta\vert = r}\sum_{\vert\bfalpha\vert, \vert\bfbeta\vert = s}\Lambda(\bfD;\bfalpha,\bfbeta;\bfeps,\bfeta).
	\end{multline} 
\end{proposition}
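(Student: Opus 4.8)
The plan is to argue by induction on $t$ (we may assume $t\ge 1$; the case $t=0$ needs no argument). For $t=1$ the estimate \eqref{eq:cancellation_main_order} is, after the reindexing $\rho=r-s$, exactly the $s=1$ instance of the ``ideal'' inequality \eqref{eq:main_estimate_ideal_world}, i.e.\ Lemma~\ref{lemma:main_estimate_odd_case}. For $t\ge 2$ I would first observe that the proposition is equivalent to a clean statement about $\Vert\Gamma^{(t)}\Vert_{\HS}^2$: writing
\[
	L_s(\rho)\ceqq\sum_{\vert\bfD\vert=N-\rho-s}\ \sum_{\vert\bfeps\vert,\vert\bfeta\vert=\rho}\ \sum_{\vert\bfalpha\vert,\vert\bfbeta\vert=s}\Lambda(\bfD;\bfalpha,\bfbeta;\bfeps,\bfeta),
\]
Lemma~\ref{lemma:rewriting_hs_norm} reads (after reindexing) $\Vert\Gamma^{(t)}\Vert_{\HS}^2=\sum_{s=0}^{t}\sum_{\rho}\binom{N-\rho-s}{t-s}L_s(\rho)$, whose $s=t$ term is precisely the left-hand side of \eqref{eq:cancellation_main_order}. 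Hence \eqref{eq:cancellation_main_order}--\eqref{eq:cancellation_main_order_condition_coefficients} are equivalent to the existence of coefficients $\widetilde C_{s,t}(\rho,N)$ with $\vert\widetilde C_{s,t}(\rho,N)\vert\le\widetilde C_tN^{t-s}$ such that
\[
	\Vert\Gamma^{(t)}\Vert_{\HS}^2\ \le\ \sum_{s=0}^{t-1}\sum_{\rho=0}^N\widetilde C_{s,t}(\rho,N)\,L_s(\rho),
\]
and this is what I would prove. Throughout, the key algebraic fact is the factorisation $\Lambda(\bfD;\bfalpha,\bfbeta;\bfeps,\bfeta)=S^{\bfD}_{\bfalpha,\bfeps}\overline{S^{\bfD}_{\bfalpha,\bfeta}}\,\overline{S^{\bfD}_{\bfbeta,\bfeps}}\,S^{\bfD}_{\bfbeta,\bfeta}$ with $S^{\bfD}_{\bfalpha,\bfeps}\ceqq\sgn(\bfalpha,\bfeps)c_{\bfD\cup\bfalpha\cup\bfeps}$, which follows from \eqref{eq:sign_permutation_union_equals_product}; it is what matches the various blocks $L_s(\rho)$ to genuine Hilbert--Schmidt inner products.

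The engine of the induction is a single-particle ``peeling'' of $\Gamma^{(t)}$. By \eqref{eq:k_prdm_rewritten}, $\Gamma^{(t)}=\sum_{\vert\bfX\vert=N-t}\vert\psi_{\bfX}\rangle\langle\psi_{\bfX}\vert$ with $\psi_{\bfX}\ceqq\sum_{\vert\bfalpha\vert=t}\sgn(\bfalpha,\bfX)c_{\bfX\cup\bfalpha}u_{\bfalpha}\in\mathfrak H^{\wedge t}$, so $\Vert\Gamma^{(t)}\Vert_{\HS}^2=\sum_{\bfX,\bfY}\vert\langle\psi_{\bfX},\psi_{\bfY}\rangle\vert^2$. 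Expanding the first tensor slot, $\psi_{\bfX}=\sum_{a\notin\bfX}\tau_{a,\bfX}\,u_a\wedge\psi_{\bfX\cup\{a\}}$ with signs $\tau_{a,\bfX}=\pm1$ and $\psi_{\bfX\cup\{a\}}\in\mathfrak H^{\wedge(t-1)}$ the analogous vector for $\Gamma^{(t-1)}$; the canonical anticommutation identity $\langle u_a\wedge\phi,u_b\wedge\chi\rangle=\delta_{ab}\langle\phi,\chi\rangle-\langle a_b\phi,a_a\chi\rangle$ (up to the normalisation of $\wedge$) then gives
\[
	\langle\psi_{\bfX},\psi_{\bfY}\rangle=\underbrace{\sum_{a}\tau_{a,\bfX}\tau_{a,\bfY}\langle\psi_{\bfX\cup\{a\}},\psi_{\bfY\cup\{a\}}\rangle}_{=:D_{\bfX,\bfY}}\ -\ \underbrace{\sum_{a,b}\tau_{a,\bfX}\tau_{b,\bfY}\langle a_b\psi_{\bfX\cup\{a\}},a_a\psi_{\bfY\cup\{b\}}\rangle}_{=:E_{\bfX,\bfY}},
\]
where the ``exchange'' term $E$ is built from $(t-2)$-particle data. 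I would then square, sum over $\bfX,\bfY$, and estimate the three pieces $\vert D\vert^2$, $\vert E\vert^2$, $2\Re(D\overline E)$. For the main piece: expanding $\vert D_{\bfX,\bfY}\vert^2$ and keeping the ``equal peeled index'' part $a=a'$ gives, after substituting $\bfX'=\bfX\cup\{a\}$, $\bfY'=\bfY\cup\{a\}$,
\[
	\sum_{\bfX,\bfY}\sum_{a}\vert\langle\psi_{\bfX\cup\{a\}},\psi_{\bfY\cup\{a\}}\rangle\vert^2=\sum_{\vert\bfX'\vert,\vert\bfY'\vert=N-t+1}\vert\bfX'\cap\bfY'\vert\,\vert\langle\psi_{\bfX'},\psi_{\bfY'}\rangle\vert^2\ \le\ N\,\Vert\Gamma^{(t-1)}\Vert_{\HS}^2,
\]
and by the induction hypothesis at level $t-1$ this is $\le\sum_{s\le t-2}\sum_{\rho}(\text{coefficient }\lesssim N^{t-s})\,L_s(\rho)$, of the required form. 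The ``unequal peeled index'' part of $\vert D\vert^2$, the term $\vert E\vert^2$, and the cross term are of strictly lower level; I would bound them by Cauchy--Schwarz in the peeled indices together with Lemma~\ref{lemma:main_estimate_odd_case}, which is precisely the statement that one free index may be ``pulled through'' at the cost of a factor $O(N)$, and, once more, the induction hypothesis.

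The step I expect to be the genuine obstacle is the control of the off-diagonal ($a\ne a'$) part of $\vert D\vert^2$ and of the cross term $D\overline E$: these are sums of products of matrix elements of \emph{different} singly-conditioned states $a_a\Psi$, $a_{a'}\Psi$, and making them lower order requires the single-index $s=1$ input applied now with an asymmetric frozen background (the pulled-out index and the block $\bfX'\triangle\bfY'$ playing the roles of $\bfalpha_0$ and $\bfbeta_0$ on the two sides, with $\bfalpha_0\ne\bfbeta_0$ in general). The second, purely bookkeeping, obstacle is to propagate the signs — the $\sgn(\bfalpha,\bfX)$ inside $\psi_{\bfX}$, the $\tau_{a,\bfX}$, and the relative signs produced by the anticommutation identity — consistently through the four coefficients of each $\Lambda$, and to check that each reduction incurs only a factor $O(N)$, so that the resulting $C_{s,t}(\rho,N)$ are finite $t$-dependent integer combinations of binomials $\binom{N-\rho-s}{\le t-s}$ times the $O(1)$ constants from Lemma~\ref{lemma:main_estimate_odd_case} and from the induction — which is exactly \eqref{eq:cancellation_main_order_condition_coefficients}. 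Iterating the whole scheme $t$ times recovers $\Vert\Gamma^{(t)}\Vert_{\HS}^2\lesssim N^{t}$, in accordance with Theorem~\ref{th:HS_estimates}.
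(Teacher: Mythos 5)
Your reformulation of the proposition as the bound $\Vert\Gamma^{(t)}\Vert_{\HS}^2\le\sum_{s\le t-1}\sum_\rho\widetilde C_{s,t}(\rho,N)L_s(\rho)$ is correct, the peeling identity holds (up to a harmless factor: one gets $t\,\psi_{\mathbf{X}}=(-1)^{t-1}\sum_{a\notin\mathbf{X}}\sgn(a,\mathbf{X})\,u_a\wedge\psi_{\mathbf{X}\cup\{a\}}$), and the diagonal ($a=a'$) part of $\vert D\vert^2$ is indeed $\le N\Vert\Gamma^{(t-1)}\Vert_{\HS}^2$ and hence handled by the induction hypothesis. But the argument stops exactly where the difficulty begins, and you say so yourself. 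The off-diagonal part of $\vert D\vert^2$, the exchange term $\vert E\vert^2$ and the cross term are not ``of strictly lower level'' in any sense you have established: for fixed $(\mathbf{X},\mathbf{Y})$ the sum over $a\neq a'$ has $\sim N^2$ terms, each a product of two $(t-1)$-level overlaps, so Cauchy--Schwarz in the peeled indices gives at best $N\sum_a\vert\langle\psi_{\mathbf{X}\cup\{a\}},\psi_{\mathbf{Y}\cup\{a\}}\rangle\vert^2$, i.e.\ an overall contribution of order $N^{t+1}$ --- one power of $N$ too many. Worse, your own identity reads $D_{\mathbf{X},\mathbf{Y}}=\langle\psi_{\mathbf{X}},\psi_{\mathbf{Y}}\rangle+E_{\mathbf{X},\mathbf{Y}}$, so $\sum_{\mathbf{X},\mathbf{Y}}\vert D\vert^2$ contains $\Vert\Gamma^{(t)}\Vert_{\HS}^2$ itself plus the $E$-contributions; bounding $\vert D\vert^2$, $\vert E\vert^2$ and the cross term \emph{separately} is therefore either circular or requires exhibiting cancellation among the $a\neq a'$ terms. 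Exhibiting that cancellation is the entire content of the proposition; ``Cauchy--Schwarz plus Lemma~\ref{lemma:main_estimate_odd_case}'' is a name for the obstacle, not an argument.

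A second, structural reason to doubt the scheme as stated is that your induction is uniform in $t$, whereas the statement behaves very differently according to the parity of $t$. The paper's proof in the odd case absorbs the self-referential $s=t$ term of \cref{lemma:main_estimate_odd_case} only because it reappears with the negative prefactor $(-1)^t\binom{r-t}{t}/2$; for even $t$ it returns with a plus sign and the same mechanism fails, forcing a genuinely different argument (\cref{lemma:main_estimate_even_case}: a $\tau$-weighted Young splitting of the $t$-body block into $(t-1)$- and $(t+1)$-body blocks with $\tau=N$, the $(t+1)$-body remainder being discarded only after summation over $r$ and $\bfD$, where it becomes minus a perfect square). Any correct proof has to confront this sign obstruction somewhere, and in your peeling it will resurface precisely in the deferred off-diagonal and cross terms; nothing in the proposal indicates how. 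As written this is a plausible strategy outline with the easy half carried out, not a proof.
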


To prove Proposition~\ref{prop:cancellation_main_order}, we need to treat the cases $t$ odd and $t$ even separately, which is mainly a consequence of \eqref{eq:sign_permutation_reverse_order}. Since the estimate obtained in the odd case is simpler and much stronger, we begin by proving that one.

\subsection{Proof of Proposition~\ref{prop:cancellation_main_order} in the odd case}

The main part of the proof of Proposition~\ref{prop:cancellation_main_order} in the odd case is contained in the following lemma.

\begin{lemma}
	\label{lemma:main_estimate_odd_case}
	Expand $\Psi$ into Slater determinants built from some orthonormal basis $(u_i)_{i_ \geq 1}$ of $\mfH$ as in Lemma~\ref{lemma:rewriting_hs_norm}. Define $\Lambda$ as in Lemma~\ref{lemma:rewriting_hs_norm}. Let $t$ be a nonnegative integer. Then, for any $t\leq r\leq N$ and for any $\bfD$ satisfying $\vert\bfD\vert = N - r$,
	\begin{multline}
		\label{eq:main_estimate_odd_case}
		\sum_{\substack{\vert\bfeps\vert,\vert\bfeta\vert = r - t}}\sum_{\substack{\vert\bfalpha\vert,\vert\bfbeta\vert = t}}\Lambda(\bfD;\bfalpha,\bfbeta;\bfeps,\bfeta) \leq \dfrac{1}{2}{r \choose t}\sum_{\substack{\vert\bfeps\vert,\vert\bfeta\vert = r}}\Lambda(\bfD;\bfeps,\bfeta)\\
		+ \dfrac{1}{2}\sum_{s = 0}^t(-1)^s{r - s \choose t}{r - s \choose t - s}\sum_{\substack{\vert\bfeps\vert,\vert\bfeta\vert = r - s}}\sum_{\substack{\vert\bfalpha\vert,\vert\bfbeta\vert = s}}\Lambda(\bfD;\bfalpha,\bfbeta;\bfeps,\bfeta).
	\end{multline}
\end{lemma}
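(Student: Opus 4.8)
The plan is to extract, from each summand $\Lambda(\bfD;\bfalpha,\bfbeta;\bfeps,\bfeta)$ with $|\bfalpha|=|\bfbeta|=t$, a new pair of "small" indices by moving part of $\bfalpha$ and $\bfbeta$ into $\bfeps$ and $\bfeta$ respectively, and to exploit the parity identity \eqref{eq:sign_permutation_reverse_order} to symmetrise. Concretely, recall from \eqref{eq:function_lamba_def} that
\[
\Lambda(\bfD;\bfalpha,\bfbeta;\bfeps,\bfeta) = \sgn(\bfalpha\cup\bfbeta,\bfeps\cup\bfeta)\,c_{\bfD\cup\bfalpha\cup\bfeps}\,\overline{c_{\bfD\cup\bfalpha\cup\bfeta}\,c_{\bfD\cup\bfbeta\cup\bfeps}}\,c_{\bfD\cup\bfbeta\cup\bfeta}.
\]
The four coefficients pair the index blocks as $(\bfalpha,\bfeps)$, $(\bfalpha,\bfeta)$, $(\bfbeta,\bfeps)$, $(\bfbeta,\bfeta)$, so $\bfalpha$ always travels with $\bfeps$ or $\bfeta$, and likewise $\bfbeta$; this means that relabelling a sub-block $\bfmu\subset\bfalpha$ as part of "$\bfeps$-or-$\bfeta$" must be done consistently in the two coefficients containing $\bfalpha$. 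The first step is therefore to introduce, for each $\bfalpha,\bfbeta,\bfeps,\bfeta$, a partition $\bfalpha = \bfmu\cup\bfalpha'$, $\bfbeta=\bfnu\cup\bfbeta'$ with $|\bfmu|=|\bfnu|=s$, observe using \eqref{eq:sign_permutation_union_equals_product} that the relative-sign prefactor factorises accordingly, and thereby rewrite the left-hand side as a sum over $s$ and over $\Lambda(\bfD;\bfalpha',\bfbeta';\bfeps\cup\bfmu,\bfeta\cup\bfnu)$ — but this naive rewriting over-counts, and the combinatorial weight ${r-s\choose t}{r-s\choose t-s}$ on the right-hand side is exactly the inclusion–exclusion correction for the number of ways a fixed small pair $(\bfeps,\bfeta)$ of size $r-s$ can arise. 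The alternating sign $(-1)^s$ will come out of \eqref{eq:sign_permutation_reverse_order} applied $s$ times to swap $\bfmu$ past $\bfnu$ (both of which have odd length in the decisive case $s=1$, which is why the odd case is where this bound bites).

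The second step is the symmetrisation. Because $\Lambda(\bfD;\bfalpha,\bfbeta;\bfeps,\bfeta)$ and $\Lambda(\bfD;\bfbeta,\bfalpha;\bfeta,\bfeps)$ are related by swapping the roles of the bar, they are complex conjugates of each other, so the left-hand side is real and equals the average of itself and its "flipped" version. Carrying out the relabelling of the previous paragraph on the flipped version produces the same family of $\Lambda$-terms but with the relative sign now reading $\sgn(\bfnu\cup\bfmu,\cdots)$ instead of $\sgn(\bfmu\cup\bfnu,\cdots)$; by \eqref{eq:sign_permutation_reverse_order} these differ by $(-1)^{s^2}=(-1)^s$ (since $s$ and $s^2$ have the same parity). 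Averaging the two, the terms with $s$ odd either cancel or reinforce depending on where they sit — and the $\tfrac12$ in front of both terms on the right-hand side of \eqref{eq:main_estimate_odd_case} is precisely this averaging factor. The $s=t$ term on the right, with weight $\tfrac12(-1)^t\binom{r-t}{t}\binom{r-t}{0}=\tfrac12(-1)^t\binom{r-t}{t}$, together with the leading term $\tfrac12\binom{r}{t}\sum\Lambda(\bfD;\bfeps,\bfeta)$ (the $s=0$ contribution with both $\bfalpha'=\bfbeta'=\emptyset$), should account for the two "extremal" pieces, and the intermediate $0<s<t$ terms are genuine remainders kept on the right.

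The main obstacle I expect is bookkeeping the over-counting correctly: when we relabel $\bfmu\subset\bfalpha$ into the $\bfeps$-block, a term that genuinely has small pair $(\bfeps_0,\bfeta_0)$ of size $r-s$ and large pair of size $s$ can be reached from many choices of which $s$ elements of the size-$t$ block were "originally small", so one must run an inclusion–exclusion in the variable $s$ to isolate each $\Lambda(\bfD;\bfalpha',\bfbeta';\bfeps,\bfeta)$ with a clean coefficient; verifying that this inclusion–exclusion collapses to exactly $(-1)^s\binom{r-s}{t}\binom{r-s}{t-s}$ is the delicate computation. A secondary subtlety is that the disjointness conventions on $\Lambda$ must be respected throughout: $\bfmu$ must be disjoint from $\bfeps$ and $\bfnu$ disjoint from $\bfeta$ for the relabelled term to be nonzero, which is automatic here but affects the counting of how many $(\bfmu,\bfnu)$ yield a given configuration. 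Once the coefficient identity is pinned down, the inequality itself is an equality after symmetrisation — there is no genuine loss — so no analytic estimate is needed, only the combinatorial identity and the two sign rules \eqref{eq:sign_permutation_reverse_order} and \eqref{eq:sign_permutation_union_equals_product}.
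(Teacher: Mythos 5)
There is a genuine gap: you assert that ``the inequality itself is an equality after symmetrisation --- there is no genuine loss --- so no analytic estimate is needed,'' but this cannot be right, and it is precisely the missing analytic step that makes the lemma work. Look at the first term on the right-hand side of \eqref{eq:main_estimate_odd_case}: since $\Lambda(\bfD;\bfeps,\bfeta)=\vert c_{\bfD\cup\bfeps}\vert^2\vert c_{\bfD\cup\bfeta}\vert^2$, it is a sum of manifestly nonnegative ``diagonal'' quantities. No amount of relabelling sub-blocks of $\bfalpha,\bfbeta$ into $\bfeps,\bfeta$ and averaging over the swap $(\bfalpha,\bfeps)\leftrightarrow(\bfbeta,\bfeta)$ can turn the genuinely off-diagonal four-fold products on the left into such diagonal squares; a purely combinatorial identity would have to preserve which coefficient products appear. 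The actual mechanism in the paper is: after moving $\bfalpha$ into $\bfeps$ and $\bfbeta$ into $\bfeta$ (via Lemma~\ref{lemma:rewritting_sums_2}) and extracting a sign $(-1)^t$, each summand factors as a product $ab$ with $a=c_{\bfD\cup\bfeps}c_{\bfD\cup\bfeta}$ (independent of $\bfbeta$) and $b$ the inner sum over $\bfbeta\subset\bfeta$; one then applies Young's inequality $ab\le\tfrac12 a^2+\tfrac12 b^2$. The $\tfrac12$'s in \eqref{eq:main_estimate_odd_case} are the Young constants, not a symmetrisation average. The first right-hand term is $\tfrac12\sum a^2$, and the second is $\tfrac12\sum b^2$, computed \emph{exactly} by expanding the square $\bigl(\sum_{\bfbeta\subset\bfeta}\cdots\bigr)^2$ into a double sum over $(\bfbeta,\bfbeta')$ and decomposing by their common part of size $t-s$; the alternating coefficient $(-1)^s\binom{r-s}{t}\binom{r-s}{t-s}$ arises there from $\sgn(\bfbeta',\bfbeta'')\sgn(\bfbeta'',\bfbeta')=(-1)^{s^2}$ together with the counting lemmas, not from an inclusion--exclusion over which elements of $\bfalpha,\bfbeta$ were ``originally small.''

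A secondary inaccuracy: $\Lambda(\bfD;\bfbeta,\bfalpha;\bfeta,\bfeps)$ is not the complex conjugate of $\Lambda(\bfD;\bfalpha,\bfbeta;\bfeps,\bfeta)$ --- swapping \emph{both} pairs returns the same term; it is the single swaps $\bfalpha\leftrightarrow\bfbeta$ or $\bfeps\leftrightarrow\bfeta$ that each produce the conjugate. This does give reality of the summed left-hand side, but it does not generate the inequality. To repair your argument you would need to identify where a square (hence a sign-definite quantity) is created; without the Young step there is no source for the positivity that the bound exploits, and the decisive cancellation in the odd case (moving the $s=t$ term, which reappears with prefactor $-\binom{r-t}{t}$, to the left and dividing) has nothing to act on.
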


Although Lemma~\ref{lemma:main_estimate_odd_case} is true regardless of the parity of $t$, it is particularly strong only in the odd case. To see why, notice that the right-hand side contains the same term as the left-hand side with a prefactor $(-1)^t{r - t \choose t}/2$. In the odd case, this allows us to bring this term to the left of the inequality and divide by a factor of order $r^t$, which yields a strong bound. In the even case, however, this argument does not work since the prefactor comes with a plus sign.

To shorten some expressions, we define
\begin{equation}
	\label{eq:main_estimate_odd_case_def_xi}
	\Xi(\bfD;\bfalpha,\bfbeta;\bfeps,\bfeta) \ceqq c_{\bfD\cup\bfalpha\cup\bfeps}\overline{c_{\bfD\cup\bfalpha\cup\bfeta}c_{\bfD\cup\bfbeta\cup\bfeps}}c_{\bfD\cup\bfbeta\cup\bfeta}.
\end{equation}
Unlike $\Lambda$ however, we do \textit{not} take the convention that $\Xi$ vanishes when the multi-indices are not disjoint.

\begin{proof}[Proof of Lemma~\ref{lemma:main_estimate_odd_case}]
	For readability's sake, we assume that the coefficients $c_{\bfA}$ are real. The same proof can be applied when dealing with complex coefficients by appropriately adding complex conjugates and moduli.
	
	We begin the proof by applying Lemma~\ref{lemma:rewritting_sums_2} twice: first to the pair $(\bfalpha,\bfeps)$, and then to $(\bfbeta,\bfeta)$. This gives
	\begin{equation}
		\label{eq:k_prdm_main_estimate_proof_odd_case_rewritten}
		\sum_{\vert\bfeps\vert,\vert\bfeta\vert = r - t}\sum_{\vert\bfalpha\vert,\vert\bfbeta\vert = t}\Lambda(\bfD;\bfalpha,\bfbeta;\bfeps,\bfeta) = \sum_{\vert\bfeps\vert,\vert\bfeta\vert = r}\sum_{\substack{\vert\bfalpha\vert,\vert\bfbeta\vert = t\\ \bfalpha\subset\bfeps, \bfbeta\subset \bfeta}}\Lambda(\bfD;\bfalpha,\bfbeta;\bfeps\setminus\bfalpha,\bfeta\setminus\bfbeta).
	\end{equation}
	Let us emphasise that we are summing over pairwise disjoint multi-indices in both the left and the right of the equality; this is implied by the assumption that $\Lambda$ vanishes whenever the multi-indices are not pairwise disjoint. Thanks to \eqref{eq:sign_permutation_reverse_order} and \eqref{eq:sign_permutation_union_equals_product}, the sign contained in the right-hand side can be rewritten as
	\begin{align*}
		\sgn(\bfalpha\cup\bfbeta,(\bfeps\setminus\bfalpha)\cup(\bfeta\setminus\bfbeta)) &= \sgn(\bfalpha,\bfbeta)\sgn(\bfbeta,\bfalpha)\sgn(\bfalpha,\bfeta\cup\bfeps\setminus\bfalpha)\sgn(\bfbeta,\bfeps\cup\bfeta\setminus\bfbeta)\\
		&= (-1)^t\sgn(\bfalpha,\bfeta\cup\bfeps\setminus\bfalpha)\sgn(\bfbeta,\bfeps\cup\bfeta\setminus\bfbeta).
	\end{align*}
	Hence, \eqref{eq:k_prdm_main_estimate_proof_odd_case_rewritten} becomes
	\begin{multline*}
		\sum_{\vert\bfeps\vert,\vert\bfeta\vert = r - t}\sum_{\vert\bfalpha\vert,\vert\bfbeta\vert = t}\Lambda(\bfD;\bfalpha,\bfbeta;\bfeps,\bfeta)\\
		= (-1)^t\sum_{\substack{\vert\bfeps\vert,\vert\bfeta\vert = r\\ \bfeps\cap\bfeta = \emptyset}}\sum_{\substack{\vert\bfalpha\vert = t\\ \bfalpha\subset\bfeps}}\sgn(\bfalpha,\bfeta\cup\bfeps\setminus\bfalpha)c_{\bfD\cup\bfeps}c_{\bfD\cup\bfeta}\sum_{\substack{\vert\bfbeta\vert = t\\ \bfbeta\subset \bfeta}}
		\begin{multlined}[t]
			\sgn(\bfbeta,\bfeps\cup\bfeta\setminus\bfbeta)\\
			\times c_{\bfD\cup\bfalpha\cup\bfeta\setminus\bfbeta}c_{\bfD\cup\bfbeta\cup\bfeps\setminus\bfalpha}.
		\end{multlined}
	\end{multline*}
	Then, using Young's inequality, we obtain
	\begin{multline}
		\label{eq:k_prdm_main_estimate_proof_odd_case_young_inequality}
		\sum_{\vert\bfeps\vert,\vert\bfeta\vert = r - t}\sum_{\vert\bfalpha\vert,\vert\bfbeta\vert = t}\Lambda(\bfD;\bfalpha,\bfbeta;\bfeps,\bfeta) \leq \dfrac{1}{2}\sum_{\substack{\vert\bfeps\vert,\vert\bfeta\vert = r\\ \bfeps\cap\bfeta = \emptyset}}\sum_{\substack{\vert\bfalpha\vert = t\\ \bfalpha\subset\bfeps}}c_{\bfD\cup\bfeps}^2c_{\bfD\cup\bfeta}^2\\
		\phantom{\leq} + \dfrac{1}{2}\sum_{\substack{\vert\bfeps\vert,\vert\bfeta\vert = r\\ \bfeps\cap\bfeta = \emptyset}}\sum_{\substack{\vert\bfalpha\vert = t\\ \bfalpha\subset\bfeps}}\bigg(\sum_{\substack{\vert\bfbeta\vert = t\\ \bfbeta\subset \bfeta}}\sgn(\bfbeta,\bfeps\cup\bfeta\setminus\bfbeta)c_{\bfD\cup\bfalpha\cup\bfeta\setminus\bfbeta}c_{\bfD\cup\bfbeta\cup\bfeps\setminus\bfalpha}\bigg)^2
	\end{multline}
		
	On the one hand, we have
	\begin{equation}
		\label{eq:k_prdm_main_estimate_proof_odd_case_first_term}
		\sum_{\substack{\vert\bfeps\vert,\vert\bfeta\vert = r\\ \bfeps\cap\bfeta = \emptyset}}\sum_{\substack{\vert\bfalpha\vert = t\\ \bfalpha\subset\bfeps}}c_{\bfD\cup\bfeps}^2c_{\bfD\cup\bfeta}^2 = {r \choose t}\sum_{\vert\bfeps\vert,\vert\bfeta\vert = r}\Lambda(\bfD;\bfeps,\bfeta).
	\end{equation}
	This matches the first term in \eqref{eq:main_estimate_odd_case}.
	
	On the other hand, we can develop
	\begin{multline}
		\label{eq:k_prdm_main_estimate_proof_odd_case_second_term}
		\bigg(\sum_{\substack{\vert\bfbeta\vert = t\\ \bfbeta\subset \bfeta}}\sgn(\bfbeta,\bfeps\cup\bfeta\setminus\bfbeta)c_{\bfD\cup\bfalpha\cup\bfeta\setminus\bfbeta}c_{\bfD\cup\bfbeta\cup\bfeps\setminus\bfalpha}\bigg)^2\\
		= \sum_{\substack{\vert\bfbeta\vert,\vert\bfbeta'\vert = t\\ \bfbeta,\bfbeta'\subset \bfeta}}\sgn(\bfbeta,\bfeps\cup\bfeta\setminus\bfbeta)\sgn(\bfbeta',\bfeps\cup\bfeta\setminus\bfbeta')\Xi(\bfD;\bfbeta,\bfbeta';\bfeps\setminus\bfalpha,\bfalpha\cup\bfeta\setminus(\bfbeta\cup\bfbeta')),
	\end{multline}
	where we recall that $\Xi$ was defined in \eqref{eq:main_estimate_odd_case_def_xi}. After an application of Lemma~\ref{lemma:rewritting_sums}, we get
	\begin{multline}
		\label{eq:k_prdm_main_estimate_proof_odd_case_second_term_not_rewritten}
		\bigg(\sum_{\substack{\vert\bfbeta\vert = t\\ \bfbeta\subset \bfeta}}\sgn(\bfbeta,\bfeps\cup\bfeta\setminus\bfbeta)c_{\bfD\cup\bfalpha\cup\bfeta\setminus\bfbeta}c_{\bfD\cup\bfbeta\cup\bfeps\setminus\bfalpha}\bigg)^2\\
		= \sum_{s = 0}^t\sum_{\substack{\vert\bfbeta\vert = t - s\\ \bfbeta\subset\bfeta}}\sum_{\substack{\vert\bfbeta'\vert,\vert\bfbeta''\vert = s\\ \bfbeta'\cap\bfbeta'' = \emptyset\\ \bfbeta',\bfbeta''\subset \bfeta}}
		\begin{multlined}[t]
			\sgn(\bfbeta\cup\bfbeta',\bfeps\cup\bfeta\setminus(\bfbeta\cup\bfbeta'))\sgn(\bfbeta\cup\bfbeta'',\bfeps\cup\bfeta\setminus(\bfbeta\cup\bfbeta''))\\
			\times\Xi(\bfD;\bfbeta\cup\bfbeta',\bfbeta\cup\bfbeta'';\bfeps\setminus\bfalpha,\bfalpha\cup\bfeta\setminus(\bfbeta\cup\bfbeta'\cup\bfbeta'')).
		\end{multlined}
	\end{multline}
	Using \eqref{eq:sign_permutation_union_equals_product}, we can rewrite the first sign in \eqref{eq:k_prdm_main_estimate_proof_odd_case_second_term_not_rewritten} as
	\begin{align*}
		\sgn(\bfbeta\cup\bfbeta',\bfeps\cup\bfeta\setminus(\bfbeta\cup\bfbeta')) &= \sgn(\bfbeta',(\bfeps\cup\bfbeta)\cup(\bfeta\setminus(\bfbeta\cup\bfbeta'\cup\bfbeta'')))\sgn(\bfbeta,\bfeps\cup\bfeta\setminus\bfbeta)\\
		&\phantom{=} \times\sgn(\bfbeta,\bfbeta')\sgn(\bfbeta',\bfbeta)\sgn(\bfbeta',\bfbeta''),
	\end{align*}
	and likewise for the second sign in \eqref{eq:k_prdm_main_estimate_proof_odd_case_second_term_not_rewritten}. As a consequence of \eqref{eq:sign_permutation_reverse_order}, $\sgn(\bfbeta,\bfbeta')\sgn(\bfbeta',\bfbeta)$ depends only on the sizes of $\bfbeta$ and $\bfbeta'$. Since $\bfbeta'$ and $\bfbeta''$ have the same size, this implies
	\begin{equation*}
		\sgn(\bfbeta,\bfbeta')\sgn(\bfbeta',\bfbeta)\sgn(\bfbeta,\bfbeta'')\sgn(\bfbeta'',\bfbeta) = 1.
	\end{equation*}
	Moreover, since $\bfbeta'$ and $\bfbeta''$ both have sizes $s$, the identity \eqref{eq:sign_permutation_reverse_order} also implies
	\begin{equation*}
		\sgn(\bfbeta',\bfbeta'')\sgn(\bfbeta'',\bfbeta') = (-1)^s.
	\end{equation*}
	Summing up,
	\begin{multline*}
		\sgn(\bfbeta\cup\bfbeta',\bfeps\cup\bfeta\setminus(\bfbeta\cup\bfbeta'))\sgn(\bfbeta\cup\bfbeta'',\bfeps\cup\bfeta\setminus(\bfbeta\cup\bfbeta''))\\
		= (-1)^s\sgn(\bfbeta',(\bfeps\cup\bfbeta)\cup(\bfeta\setminus(\bfbeta\cup\bfbeta'\cup\bfbeta'')))\sgn(\bfbeta'',(\bfeps\cup\bfbeta)\cup(\bfeta\setminus(\bfbeta\cup\bfbeta'\cup\bfbeta''))).
	\end{multline*}
	Injecting this into \eqref{eq:k_prdm_main_estimate_proof_odd_case_second_term_not_rewritten} yields
	\begin{multline*}
		\bigg(\sum_{\substack{\vert\bfbeta'\vert = t\\ \bfbeta'\subset \bfeta}}\sgn(\bfbeta',\bfeps\cup\bfeta\setminus\bfbeta')c_{\bfD\cup\bfalpha\cup\bfeta\setminus\bfbeta'}c_{\bfD\cup\bfbeta'\cup\bfeps\setminus\bfalpha}\bigg)^2\\
		= \sum_{s = 0}^t(-1)^s\sum_{\substack{\vert\bfbeta\vert = t - s\\ \bfbeta \subset\bfeta}}\sum_{\substack{\vert\bfbeta'\vert,\vert\bfbeta''\vert = s\\ \bfbeta',\bfbeta''\subset\bfeta}}\Lambda(\bfD;\bfbeta',\bfbeta'';\bfbeta\cup\bfeps\setminus\bfalpha,\bfalpha\cup\bfeta\setminus(\bfbeta\cup\bfbeta'\cup\bfbeta'')).
	\end{multline*}
	To conclude the proof of Lemma~\ref{lemma:main_estimate_odd_case}, we sum over $\bfeps,\bfeta$ and $\bfalpha$, and use four successive applications of Lemma~\ref{lemma:rewritting_sums_2} in order to get rid of the conditions $\bfalpha\subset\bfeps$ and $\bfbeta,\bfbeta',\bfbeta''\subset\bfeta$. A first application of Lemma~\ref{lemma:rewritting_sums_2} yields
	\begin{equation*}
		\sum_{\vert\bfeps\vert = r}\sum_{\substack{\vert\bfalpha\vert = t\\ \bfalpha\subset\bfeps}}\Lambda(\bfD;\bfbeta',\bfbeta'';\bfbeta\cup\bfeps\setminus\bfalpha,\bfalpha\cup\bfeta\setminus(\bfbeta\cup\bfbeta'\cup\bfbeta'')) = \sum_{\vert\bfeps\vert = r - t}\sum_{\vert\bfalpha\vert = t}\Lambda(\bfD;\bfbeta',\bfbeta'';\bfbeta\cup\bfeps;\bfalpha\cup\bfeta\setminus(\bfbeta\cup\bfbeta\cup\bfbeta'')).
	\end{equation*}
	Doing the same thing with $\bfeta$ and $\bfbeta$, then $\bfeta$ and $\bfbeta'$, and lastly with $\bfeta$ and $\bfbeta''$, we find
	\begin{multline*}
		\sum_{\substack{\vert\bfeps\vert,\vert\bfeta\vert = r\\ \bfeps\cap\bfeta = \emptyset}}\sum_{\substack{\vert\bfalpha\vert = t\\ \bfalpha\subset\bfeps}}\bigg(\sum_{\substack{\vert\bfbeta'\vert = t\\ \bfbeta'\subset \bfeta}}\sgn(\bfbeta',\bfeps\cup\bfeta\setminus\bfbeta')c_{\bfD\cup\bfalpha\cup\bfeta\setminus\bfbeta'}c_{\bfD\cup\bfbeta'\cup\bfeps\setminus\bfalpha}\bigg)^2\\
		= \sum_{s = 0}^t(-1)^s\sum_{\vert\bfeps\vert = r - t}\sum_{\vert\bfeta\vert = r - t - s}\sum_{\vert\bfalpha\vert = t}\sum_{\vert\bfbeta\vert = t - s}\sum_{\vert\bfbeta'\vert,\vert\bfbeta''\vert = s}\Lambda(\bfD;\bfbeta',\bfbeta'';\bfbeta\cup\bfeps,\bfalpha\cup\bfeta).
	\end{multline*}
	We now use Lemma~\ref{lemma:rewritting_sums_3} to regroup the sums over $\bfeps$ and $\bfbeta$ into a single sum, and likewise for $\bfeta$ and $\bfalpha$. The finally gives
	\begin{multline}
		\label{eq:k_prdm_main_estimate_proof_odd_case_second_term_estimate}
		\sum_{\substack{\vert\bfeps\vert,\vert\bfeta\vert = r\\ \bfeps\cap\bfeta = \emptyset}}\sum_{\substack{\vert\bfalpha\vert = t\\ \bfalpha\subset\bfeps}}\bigg(\sum_{\substack{\vert\bfbeta\vert = t\\ \bfbeta\subset \bfeta}}\sgn(\bfbeta,\bfeps\cup\bfeta\setminus\bfbeta)c_{\bfD\cup\bfalpha\cup\bfeta\setminus\bfbeta}c_{\bfD\cup\bfbeta\cup\bfeps\setminus\bfalpha}\bigg)^2\\
		= \sum_{s = 0}^t(-1)^s{r - s \choose t}{r - s \choose t - s}\sum_{\vert\bfeps\vert,\vert\bfeta\vert = r - s}\sum_{\vert\bfalpha\vert,\vert\bfbeta\vert = s}\Lambda(\bfD;\bfalpha,\bfbeta;\bfeps,\bfeta).
	\end{multline}
	Here, we renamed $\bfbeta'$ into $\bfalpha$ and $\bfbeta''$ into $\bfbeta$.
	
	Injecting the identities \eqref{eq:k_prdm_main_estimate_proof_odd_case_first_term} and \eqref{eq:k_prdm_main_estimate_proof_odd_case_second_term_estimate} into the estimate \eqref{eq:k_prdm_main_estimate_proof_odd_case_young_inequality} concludes the proof of Lemma~\ref{lemma:main_estimate_odd_case}.
	
\end{proof}
	
\begin{proof}[Proof of Proposition~\ref{prop:cancellation_main_order} in the odd case]
	We use Lemma~\ref{lemma:main_estimate_odd_case} to prove Proposition~\ref{prop:cancellation_main_order}. Notice that, in the right-hand side of \eqref{eq:main_estimate_odd_case}, the term corresponding to $s = t$ is the same as the one on the left-hand side with a prefactor $-{r - t \choose t}$ (the oddness of $t$ is crucial for the minus sign). Hence, we can simply shift this term the left of the equation, and divide both sides by $1 + {r - t \choose t}/2$ to obtain
	\begin{align*}
		\sum_{\vert\bfeps\vert,\vert\bfeta\vert = r - t}\sum_{\vert\bfalpha\vert,\vert\bfbeta\vert = t}\Lambda(\bfD;\bfalpha,\bfbeta;\bfeps,\bfeta) &\leq \dfrac{1}{2}\sum_{s = 0}^{t - 1}\dfrac{{r - s \choose t}{r - s \choose t - s}}{2 + {r - t \choose t}}(-1)^s\sum_{\vert\bfeps\vert,\vert\bfeta\vert  = r - s}\sum_{\vert\bfalpha\vert,\vert\bfbeta\vert = s}\Lambda(\bfD;\bfalpha,\bfbeta;\bfeps,\bfeta)\\
		&\phantom{\leq} + \dfrac{{r \choose t}}{2 + {r - t \choose t}}\sum_{\vert\bfeps\vert,\vert\bfeta\vert  = r}\Lambda(\bfD;\bfeps,\bfeta).
	\end{align*}
	The estimates
	\begin{equation*}
		\dfrac{{r \choose t}}{2 + {r - t \choose t}} \leq C_t \quad \textmd{and} \quad \dfrac{{r - s \choose t}{r - s \choose t - s}}{2 + {r - t \choose t}} \leq C_tN^{t - s}
	\end{equation*}
	show that we have proven \eqref{eq:cancellation_main_order} for $t$ odd.
\end{proof}

\subsection{Proof of Proposition~\ref{prop:cancellation_main_order} in the even case}

The core argument of the proof of Proposition~\ref{prop:cancellation_main_order} in the even case is contained in the following lemma.

\begin{lemma}
	\label{lemma:main_estimate_even_case}
	Expand $\Psi$ into Slater determinants built from some orthonormal basis $(u_i)_{i_ \geq 1}$ of $\mfH$ as in Lemma~\ref{lemma:rewriting_hs_norm}. Define $\Lambda$ as in \eqref{eq:function_lamba_def}. Let $t$ be an integer. Then, for any $t \leq r\leq N$ and for any $\bfD$ satisfying $\vert\bfD\vert = N - r$,
	\begin{multline}
		\label{eq:main_estimate_even_case}
		\sum_{\substack{\vert\bfeps\vert,\vert\bfeta\vert = r - t}}\sum_{\substack{\vert\bfalpha\vert,\vert\bfbeta\vert = t}}\Lambda(\bfD;\bfalpha,\bfbeta;\bfeps,\bfeta) \leq \sum_{s = 1}^{t - 1}(-1)^{s + t}{r - s \choose t - s}\sum_{\vert\bfeps\vert,\vert\bfeta\vert = r - s}\sum_{\vert\bfalpha\vert,\vert\bfbeta\vert = s}\Lambda(\bfD;\bfalpha,\bfbeta;\bfeps,\bfeta)\\
		\begin{aligned}[t]
			&\phantom{\leq} + \dfrac{1}{2t\tau}\sum_{s = 0}^{t + 1}\sum_{u = 0}^{\min(1,s)}(-1)^{s}{r - s \choose t - s + u}{r - s \choose 1 - u}\sum_{\vert\bfeps\vert,\vert\bfeta\vert=r - s}\sum_{\vert\bfalpha\vert,\vert\bfbeta\vert = s}\Lambda(\bfD;\bfalpha,\bfbeta;\bfeps,\bfeta)\\
			&\phantom{\leq} + \dfrac{\tau}{2t}\sum_{s = 0}^{t - 1}(-1)^s{r - s \choose t - s - 1}\sum_{\vert\bfeps\vert,\vert\bfeta\vert=r - s}\sum_{\vert\bfalpha\vert,\vert\bfbeta\vert = s}\Lambda(\bfD;\bfalpha,\bfbeta;\bfeps,\bfeta),
		\end{aligned}
	\end{multline}
	for all $\tau > 0$.
\end{lemma}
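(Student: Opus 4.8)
The plan is to follow the architecture of the odd case (\cref{lemma:main_estimate_odd_case}) and insert one additional device — peeling a single index, which is where the factor $1/t$ and the free parameter $\tau$ enter — in order to bypass the sign obstruction that blocks the odd-case argument when $t$ is even.

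First I would repeat verbatim the opening of the proof of \cref{lemma:main_estimate_odd_case}: apply \cref{lemma:rewritting_sums_2} to the pairs $(\bfalpha,\bfeps)$ and $(\bfbeta,\bfeta)$ to drop the disjointness constraints, then simplify the relative sign by means of \eqref{eq:sign_permutation_reverse_order} and \eqref{eq:sign_permutation_union_equals_product}. The only difference is that now $\sgn(\bfalpha,\bfbeta)\sgn(\bfbeta,\bfalpha)=(-1)^{t^{2}}=1$, so (taking the $c_{\bfA}$ real, the general case being recovered as in the odd case) one arrives, suppressing the dependence on $\bfeps,\bfeta$, at
\begin{equation*}
\sum_{\vert\bfeps\vert,\vert\bfeta\vert=r-t}\sum_{\vert\bfalpha\vert,\vert\bfbeta\vert=t}\Lambda(\bfD;\bfalpha,\bfbeta;\bfeps,\bfeta)=\sum_{\vert\bfeps\vert,\vert\bfeta\vert=r}c_{\bfD\cup\bfeps}c_{\bfD\cup\bfeta}\sum_{\substack{\bfalpha\subset\bfeps\\\vert\bfalpha\vert=t}}\sgn(\bfalpha,\bfeps\cup\bfeta\setminus\bfalpha)\,y(\bfalpha),
\end{equation*}
where $y(\bfalpha)\coloneqq\sum_{\bfbeta\subset\bfeta,\vert\bfbeta\vert=t}\sgn(\bfbeta,\bfeps\cup\bfeta\setminus\bfbeta)\,c_{\bfD\cup\bfalpha\cup\bfeta\setminus\bfbeta}\,c_{\bfD\cup\bfbeta\cup\bfeps\setminus\bfalpha}$. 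The obstruction is now transparent: the Young step $xy\le\tfrac12x^{2}+\tfrac12 y^{2}$ with $x=\sgn(\bfalpha,\cdot)\,c_{\bfD\cup\bfeps}c_{\bfD\cup\bfeta}$ used in the odd case, after expanding $y^{2}$ with \cref{lemma:rewritting_sums} exactly as in the passage \eqref{eq:k_prdm_main_estimate_proof_odd_case_second_term}--\eqref{eq:k_prdm_main_estimate_proof_odd_case_second_term_estimate}, reproduces the term $(-1)^{t}\tfrac12\binom{r-t}{t}$ times the left-hand side, and for even $t$ this has a plus sign and cannot be absorbed.

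To regularise the dangerous self-term I would peel one index off $\bfbeta$ before squaring: since each size-$t$ subset of $\bfeta$ is obtained in exactly $t$ ways, $y(\bfalpha)=\tfrac1t\sum_{b\in\bfeta}\sum_{\bfbeta^{\flat}\subset\bfeta\setminus\{b\},\,\vert\bfbeta^{\flat}\vert=t-1}(\cdots)$, which accounts for the prefactors $\tfrac1{2t\tau}$ and $\tfrac{\tau}{2t}$ in \eqref{eq:main_estimate_even_case}. After using \eqref{eq:sign_permutation_equals_product}--\eqref{eq:sign_permutation_union_equals_product} to isolate the $b$-dependence of the sign, the full summand of the left-hand side can be written as a bilinear expression $\sum a\cdot b'$ in which, upon squaring, the first factor produces $\bfbeta$-blocks of sizes up to $t+1$ — the peeled index being reattached — while the second produces blocks of sizes up to $t-1$. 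Applying the weighted Young inequality $ab'\le a^{2}/(2\tau)+\tau\, b'^{2}/2$ and then running the same cascade of reindexings as in the odd case (expand the double sums with \cref{lemma:rewritting_sums}, reinstate the disjointness constraints with repeated applications of \cref{lemma:rewritting_sums_2}, regroup with \cref{lemma:rewritting_sums_3}) turns $\sum a^{2}$ into the sum $\sum_{s=0}^{t+1}$ in \eqref{eq:main_estimate_even_case} and $\sum b'^{2}$ into the sum $\sum_{s=0}^{t-1}$ with the single binomial $\binom{r-s}{t-s-1}$, while the cross terms not captured by either square assemble into $\sum_{s=1}^{t-1}(-1)^{s+t}\binom{r-s}{t-s}(\cdots)$. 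The auxiliary index $u\in\{0,\min(1,s)\}$ in the $\sum_{s=0}^{t+1}$ term records the two possible destinations of the peeled index $b$ after re-summation — merged with the surviving $\bfeps/\bfeta$-indices, contributing the factor $\binom{r-s}{1}$ ($u=0$), or merged with the $\bfbeta$-block, shifting the companion binomial to $\binom{r-s}{t-s+1}$ ($u=1$) — the two options combining Vandermonde-style; the signs $(-1)^{s}$ and $(-1)^{s+t}$ come from the same parity bookkeeping via \eqref{eq:sign_permutation_reverse_order} as in the odd case, with the leftover $(-1)^{t}$ absorbed. Collecting everything and matching binomial coefficients by Pascal's and Vandermonde's identities yields \eqref{eq:main_estimate_even_case}.

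The main obstacle is precisely this last bookkeeping: controlling all signs through the weighted Young step together with the double squaring/reindexing procedure, and pinning down exactly when the peeled index forces the shifted binomial $\binom{r-s}{t-s+1}$ rather than the plain factor $\binom{r-s}{1}$. This is genuinely more delicate than the odd case and is where the even/odd asymmetry is finally resolved; everything else is a routine, if lengthy, repetition of combinatorial manipulations already carried out in the proof of \cref{lemma:main_estimate_odd_case}.
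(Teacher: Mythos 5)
Your high-level strategy---peel one index off a $t$-block to produce the factor $1/t$, trade the resulting product of a $(t-1)$-body object against a $(t+1)$-body object via a $\tau$-weighted Young inequality, and compute the leftover coupling exactly---is precisely the mechanism of the paper's proof, so the key idea is in place. There is, however, a concrete gap at the step where you assert that "the full summand of the left-hand side can be written as a bilinear expression $\sum a\cdot b'$". In the symmetric arrangement you set up (Lemma~\ref{lemma:rewritting_sums_2} applied to $(\bfalpha,\bfeps)$ and $(\bfbeta,\bfeta)$, so $\bfalpha\subset\bfeps$, $\bfbeta\subset\bfeta$, $\vert\bfeps\vert=\vert\bfeta\vert=r$), the four coefficients in the summand are $c_{\bfD\cup\bfeps}$, $c_{\bfD\cup\bfeta}$, $c_{\bfD\cup\bfalpha\cup\bfeta\setminus\bfbeta}$ and $c_{\bfD\cup\bfbeta\cup\bfeps\setminus\bfalpha}$. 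The last two each depend \emph{jointly} on $\bfalpha$ and $\bfbeta$, so after peeling $b$ from $\bfbeta$ there is no factorisation into a factor depending only on $(\bfalpha,b)$ times a factor depending only on $\bfbeta\setminus\{b\}$, and the weighted Young step cannot be launched from this position. The paper resolves exactly this point by an \emph{asymmetric} reindexing: it peels a singleton $\bfdelta$ off the second $t$-block and then applies Lemma~\ref{lemma:rewritting_sums} so that both $\bfalpha$ (size $t$) and the reduced block $\bfbeta$ (size $t-1$) sit inside the same enlarged $\bfeps$ of size $r+t-1$, while only $\bfdelta$ sits inside $\bfeta$ of size $r-t+1$. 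With that choice the product of coefficients collapses to $c_{\bfD\cup\bfeps\setminus\bfbeta}\,c_{\bfD\cup\bfbeta\cup\bfeta}$ (depending only on $\bfbeta$) times $c_{\bfD\cup\bfdelta\cup\bfeps\setminus\bfalpha}\,c_{\bfD\cup\bfalpha\cup\bfeta\setminus\bfdelta}$ (depending only on $(\bfalpha,\bfdelta)$), which is what makes the decoupling $\cI_1\leq\tfrac{\tau}{2}\cJ_1+\tfrac{1}{2\tau}\cJ_2$ possible.

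A second, smaller imprecision: the first sum $\sum_{s=1}^{t-1}(-1)^{s+t}\binom{r-s}{t-s}(\cdots)$ is not a "cross term not captured by either square". It is the exact correction $\cI_2$ coming from the terms with $\bfalpha\cap\bfbeta\neq\emptyset$, which must be added and subtracted \emph{before} Young's inequality in order to decouple the $\bfbeta$-sum from the $(\bfalpha,\bfdelta)$-sum; it is then evaluated identically, not estimated. Your reading of the auxiliary index $u$ is essentially correct in spirit: in the expansion of $\cJ_2$ it records whether the two peeled singletons coming from the two factors of the square coincide ($u=0$) or are distinct ($u=1$), which is what produces the pair of binomials $\binom{r-s}{t-s+u}\binom{r-s}{1-u}$. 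None of this changes the verdict that, as written, your plan stalls at the factorisation step and needs the paper's asymmetric reindexing to go through.
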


Just as in the proof of Lemma~\ref{lemma:main_estimate_odd_case}, we introduce
\begin{equation}
	\label{eq:main_estimate_even_case_def_xi}
	\Xi(\bfD;\bfalpha,\bfbeta;\bfeps,\bfeta) \ceqq c_{\bfD\cup\bfalpha\cup\bfeps}\overline{c_{\bfD\cup\bfalpha\cup\bfeta}c_{\bfD\cup\bfbeta\cup\bfeps}}c_{\bfD\cup\bfbeta\cup\bfeta}
\end{equation}
to shorten some notations. Again, we are \textit{not} imposing that $\Xi$ vanishes when the multi-indices are not disjoint.

\begin{proof}
	Assume for simplicity that the coefficients $c_\bfA$ are real. The generalisation to complex coefficients follows by adding complex conjugates and moduli where necessary.
	
	Using Lemma~\ref{lemma:rewritting_sums_2}, we can write
	\begin{equation*}
		\sum_{\vert\bfbeta\vert = t}\Lambda(\bfD;\bfalpha,\bfbeta;\bfeps,\bfeta) = t^{-1}\sum_{\vert\bfbeta\vert = t - 1}\sum_{\vert\bfdelta\vert = 1}\Lambda(\bfD;\bfalpha,\bfbeta\cup\bfdelta;\bfeps,\bfeta).
	\end{equation*}
	Note that the disjointness of $\bfbeta$ and $\bfdelta$ is insured by the assumption on $\Lambda$. Now, we apply Lemma~\ref{lemma:rewritting_sums} three times: once with $\bfalpha$ and $\bfeps$, then with $\bfbeta$ and the new $\bfeps$ (that now has length $r$), and lastly with $\bfdelta$ and $\bfeta$. This yields
	\begin{multline}
		\label{eq:k_prdm_main_estimate_proof_even_case_rewritten}
		\sum_{\substack{\vert\bfeps\vert,\vert\bfeta\vert = r - t}}\sum_{\substack{\vert\bfalpha\vert,\vert\bfbeta\vert = t}}\Lambda(\bfD;\bfalpha,\bfbeta;\bfeps,\bfeta)\\
		= \dfrac{1}{t}\sum_{\substack{\vert\bfeps\vert = r + t - 1\\ \vert\bfeta\vert = r - t + 1}}\sum_{\substack{\vert\bfalpha\vert = t\\ \bfalpha\subset\bfeps}}\sum_{\substack{\vert\bfbeta\vert = t - 1\\ \bfbeta\subset\bfeps}}\sum_{\substack{\vert\bfdelta\vert = 1\\ \bfdelta\subset\bfeta}}\Lambda(\bfD;\bfalpha,\bfbeta\cup\bfdelta;\bfeps\setminus(\bfalpha\cup\bfbeta),\bfeta\setminus\bfdelta).
	\end{multline}
	Using \eqref{eq:sign_permutation_reverse_order} and \eqref{eq:sign_permutation_union_equals_product}, as well as the fact that $t - 1$ and $t + 1$ have the same parity, we obtain
	\begin{multline*}
		\sgn(\bfalpha\cup\bfbeta\cup\bfdelta,\bfeps\cup\bfeta\setminus(\bfalpha\cup\bfbeta\cup\bfdelta))\\
		\begin{aligned}[t]
			&= \sgn(\bfalpha\cup\bfdelta,\bfbeta)\sgn(\bfbeta,\bfalpha\cup\bfbeta)\sgn(\bfalpha\cup\bfdelta,\bfeps\cup\bfeta\setminus(\bfalpha\cup\bfdelta))\sgn(\bfbeta,\bfeps\cup\bfeta\setminus\bfbeta)\\
			&= (-1)^{t + 1}\sgn(\bfalpha\cup\bfdelta,\bfeps\cup\bfeta\setminus(\bfalpha\cup\bfdelta))\sgn(\bfbeta,\bfeps\cup\bfeta\setminus\bfbeta).
		\end{aligned}
	\end{multline*}
	Hence,
	\begin{multline*}
		\sum_{\substack{\vert\bfalpha\vert = t\\ \bfalpha\subset\bfeps}}\sum_{\substack{\vert\bfbeta\vert = t - 1\\ \bfbeta\subset\bfeps}}\sum_{\substack{\vert\bfdelta\vert = 1\\ \bfdelta\subset\bfeta}}\Lambda(\bfD;\bfalpha,\bfbeta\cup\bfdelta;\bfeps\setminus(\bfalpha\cup\bfbeta),\bfeta\setminus\bfdelta)\\
		= (-1)^{t + 1}\sum_{\substack{\vert\bfbeta\vert = t - 1\\ \bfbeta\subset\bfeps}}\sgn(\bfbeta,\bfeps\cup\bfeta\setminus\bfbeta)c_{\bfD\cup\bfeps\setminus\bfbeta}c_{\bfD\cup\bfbeta\cup\bfeta}\sum_{\substack{\vert\bfalpha\vert = t\\ \bfalpha\subset\bfeps\setminus\bfbeta}}\sum_{\substack{\vert\bfdelta\vert = 1\\ \bfdelta\subset\bfeta}}
		\begin{multlined}[t]
			\sgn(\bfalpha\cup\bfdelta,\bfeps\cup\bfeta\setminus(\bfalpha\cup\bfdelta))\\
			\times c_{\bfD\cup\bfdelta\cup\bfeps\setminus\bfalpha}c_{\bfD\cup\bfalpha\cup\bfeta\setminus\bfdelta}.
		\end{multlined}
	\end{multline*}
	To bound this term, we would like to use Young's equality to separate the sum over $\bfbeta$ from the sums over $\bfalpha$ and $\bfdelta$. However, we currently cannot do so because of the disjointness condition $\bfalpha\cap\bfbeta = \emptyset$. To get around this issue, we add and remove the missing terms. Namely, we define
	\begin{equation}
		\label{eq:k_prdm_main_estimate_proof_even_case_first_term}
		\cI_1(\bfD;\bfeps,\bfeta) \ceqq \sum_{\substack{\vert\bfbeta\vert = t - 1\\ \bfbeta\subset\bfeps}}
		\begin{multlined}[t]
			\sgn(\bfbeta,\bfeps\cup\bfeta\setminus\bfbeta)c_{\bfD\cup\bfeps\setminus\bfbeta}c_{\bfD\cup\bfbeta\cup\bfeta}\\
			\times\sum_{\substack{\vert\bfalpha\vert = t\\ \bfalpha\subset\bfeps}}\sum_{\substack{\vert\bfdelta\vert = 1\\ \bfdelta\subset\bfeta}}\sgn(\bfalpha\cup\bfdelta,\bfeps\cup\bfeta\setminus(\bfalpha\cup\bfdelta))c_{\bfD\cup\bfdelta\cup\bfeps\setminus\bfalpha}c_{\bfD\cup\bfalpha\cup\bfeta\setminus\bfdelta}
		\end{multlined}
	\end{equation}
	and
	\begin{equation}
		\label{eq:k_prdm_main_estimate_proof_even_case_second_term}
		\cI_2(\bfD;\bfeps,\bfeta) \ceqq \sum_{\substack{\vert\bfbeta\vert = t - 1\\ \bfbeta\subset\bfeps}}\sum_{\substack{\vert\bfalpha\vert = t\\ \bfalpha\subset\bfeps\\ \bfalpha\cap\bfbeta \neq \emptyset}}\sum_{\substack{\vert\bfdelta\vert = 1\\ \bfdelta\subset\bfeta}}
		\begin{multlined}[t]
			\sgn(\bfbeta,\bfeps\cup\bfeta\setminus\bfbeta)\sgn(\bfalpha\cup\bfdelta,\bfeps\cup\bfeta\setminus(\bfalpha\cup\bfdelta))\\
			\times \Xi(\bfD;\bfalpha,\bfbeta\cup\bfdelta;\bfeps\setminus(\bfalpha\cup\bfbeta), \bfeta\setminus\bfdelta),
		\end{multlined}
	\end{equation}
	and write
	\begin{equation*}
		\sum_{\substack{\vert\bfalpha\vert = t\\ \bfalpha\subset\bfeps}}\sum_{\substack{\vert\bfbeta\vert = t - 1\\ \bfbeta\subset\bfeps}}\sum_{\substack{\vert\bfdelta\vert = 1\\ \bfdelta\subset\bfeta}}\Lambda(\bfD;\bfalpha,\bfbeta\cup\bfdelta;\bfeps\setminus(\bfalpha\cup\bfbeta),\bfeta\setminus\bfdelta) = (-1)^{t + 1}\cI_1(\bfD;\bfeps,\bfeta) + (-1)^t\cI_2(\bfD;\bfeps,\bfeta).
	\end{equation*}
	Recall that $\Xi$ was defined in \eqref{eq:main_estimate_even_case_def_xi}.
	
	As we now show, the term $\cI_2$ can be put into a much simpler form. Using Lemma~\ref{lemma:rewritting_sums}, we can write
	\begin{multline*}
		\cI_2(\bfD;\bfeps,\bfeta)\\
		= \sum_{s = 0}^{t - 2}\sum_{\substack{\vert\bfalpha\vert = t - 1 - s\\ \bfalpha \subset\bfeps}}\sum_{\substack{\vert\bfbeta'\vert = s\\ \vert\bfalpha'\vert = s + 1\\ \bfalpha'\cap\bfbeta' = \emptyset\\ \bfalpha',\bfbeta'\subset\bfeps}}
		\begin{multlined}[t]
			\sgn(\bfalpha\cup\bfbeta',\bfeps\cup\bfeta\setminus(\bfalpha\cup\bfbeta'))\sgn(\bfalpha\cup\bfalpha'\cup\bfdelta,\bfeps\cup\bfeta\setminus(\bfalpha\cup\bfalpha'\cup\bfdelta))\\
			\times \Xi(\bfD;\bfdelta\cup\bfbeta',\bfalpha';\bfeps\setminus(\bfalpha\cup\bfalpha'\cup\bfbeta'),\bfalpha\cup\bfeta\setminus\bfdelta).
		\end{multlined}
	\end{multline*}
	Using \eqref{eq:sign_permutation_reverse_order} and \eqref{eq:sign_permutation_union_equals_product}, we can show that
	\begin{multline*}
		\sgn(\bfalpha\cup\bfbeta',\bfeps\cup\bfeta\setminus(\bfalpha\cup\bfbeta'))\sgn(\bfalpha\cup\bfalpha'\cup\bfdelta,\bfeps\cup\bfeta\setminus(\bfalpha\cup\bfalpha'\cup\bfdelta))\\
		\begin{multlined}[t]
			= \sgn(\bfdelta\cup\bfalpha'\cup\bfbeta',(\bfalpha\cup\bfeta\setminus\bfbeta)\cup(\bfeps\setminus(\bfalpha\cup\bfalpha'\cup\bfbeta')))\\
			\times \sgn(\bfdelta\cup\bfalpha'\cup\bfbeta',\bfalpha)\sgn(\bfalpha,\bfdelta\cup\bfalpha'\cup\bfbeta')\sgn(\bfdelta\cup\bfalpha',\bfbeta')\sgn(\bfbeta',\bfdelta\cup\bfalpha').
		\end{multlined}
	\end{multline*}
	Since $\bfdelta\cup\bfalpha'\cup\bfbeta'$, $\bfdelta\cup\bfalpha'$ and $\bfbeta'$, respectively, have sizes $2(s + 1)$, $s + 2$ and $s$, we deduce
	\begin{multline*}
		\sgn(\bfalpha\cup\bfbeta',\bfeps\cup\bfeta\setminus(\bfalpha\cup\bfbeta'))\sgn(\bfalpha\cup\bfalpha'\cup\bfdelta,\bfeps\cup\bfeta\setminus(\bfalpha\cup\bfalpha'\cup\bfdelta))\\
		= (-1)^s\sgn(\bfdelta\cup\bfalpha'\cup\bfbeta',(\bfalpha\cup\bfeta\setminus\bfbeta)\cup(\bfeps\setminus(\bfalpha\cup\bfalpha'\cup\bfbeta'))).
	\end{multline*}
	Summing up, we have shown that
	\begin{equation*}
		\cI_2(\bfD;\bfeps,\bfeta) = \sum_{s = 0}^{t - 2}(-1)^s\sum_{\substack{\vert\bfalpha\vert = t - 1 - s\\ \bfalpha \subset \bfeps}}\sum_{\substack{\vert\bfbeta'\vert = s\\ \bfbeta' \subset \bfeps}}\sum_{\substack{\vert\bfalpha'\vert = s + 1\\ \bfalpha' \subset \bfeps}}\sum_{\substack{\vert\bfdelta\vert = 1\\ \bfdelta\subset\bfeta}}\Lambda(\bfD;\bfdelta\cup\bfbeta',\bfalpha';\bfeps\setminus(\bfalpha\cup\bfalpha'\cup\bfbeta'),\bfalpha\cup\bfeta\setminus\bfdelta).
	\end{equation*}
	We may now sum over $\bfeps$ and $\bfeta$, and apply Lemma~\ref{lemma:rewritting_sums_2} multiple times to obtain
	\begin{equation*}
		\sum_{\substack{\vert\bfeps\vert = r + t - 1\\ \vert\bfeta\vert = r - t + 1}}\cI_2(\bfD;\bfeps,\bfeta) = \sum_{s = 0}^{t - 2}(-1)^s\sum_{\substack{\vert\bfeps\vert = r - s -1\\ \vert\bfeta\vert = r - t}}\sum_{\vert\bfalpha\vert = t - s - 1}\sum_{\vert\bfbeta'\vert = s}\sum_{\vert \bfalpha'\vert = s + 1}\sum_{\vert\bfdelta\vert = 1}\Lambda(\bfD;\bfdelta\cup\bfbeta',\bfalpha';\bfeps,\bfalpha\cup\bfeta).
	\end{equation*}
	Using Lemma~\ref{lemma:rewritting_sums_3} and renaming the multi-indices, we find
	\begin{equation*}
		\sum_{\substack{\vert\bfeps\vert = r + t - 1\\ \vert\bfeta\vert = r - t + 1}}\cI_2(\bfD;\bfeps,\bfeta) = \sum_{s = 1}^{t - 1}(-1)^{s + 1}s{r - s \choose t - s}\sum_{\vert\bfeps\vert,\vert\bfeta\vert = r - s}\sum_{\vert\bfalpha\vert,\vert\bfbeta\vert = s}\Lambda(\bfD;\bfalpha,\bfbeta;\bfeps,\bfeta).
	\end{equation*}
	
	To sum up, we have so far proven that
	\begin{multline}
		\label{eq:k_prdm_main_estimate_proof_even_case_intermediate_summary}
		 \sum_{\substack{\vert\bfeps\vert,\vert\bfeta\vert = r - t}}\sum_{\substack{\vert\bfalpha\vert,\vert\bfbeta\vert = t}}\Lambda(\bfD;\bfalpha,\bfbeta;\bfeps,\bfeta) = \sum_{s = 1}^{t - 1}(-1)^{s + t}\dfrac{s}{t}{r - s \choose t - s}\sum_{\vert\bfeps\vert,\vert\bfeta\vert = r - s}\sum_{\vert\bfalpha\vert,\vert\bfbeta\vert = s}\Lambda(\bfD;\bfalpha,\bfbeta;\bfeps,\bfeta)\\
		 + \dfrac{(-1)^{t + 1}}{t}\sum_{\substack{\vert\bfeps\vert = r + t -1\\ \vert\bfeta\vert = r - t + 1}}\cI_1(\bfD;\bfeps,\bfeta),
	\end{multline}
	where $\cI_1$ is given by \eqref{eq:k_prdm_main_estimate_proof_even_case_first_term}. Next, we bound $\cI_1$. Defining
	\begin{equation*}
		\cJ_1(\bfD;\bfeps,\bfeta) \ceqq \bigg(\sum_{\substack{\vert\bfbeta\vert = t - 1\\ \bfbeta\subset\bfeps}}\sgn(\bfbeta,\bfeps\cup\bfeta\setminus\bfbeta)c_{\bfD\cup\bfeps\setminus\bfbeta}c_{\bfD\cup\bfbeta\cup\bfeta}\bigg)^2
	\end{equation*}
	and
	\begin{equation*}
		\cJ_2(\bfD;\bfeps,\bfeta) \ceqq \bigg(\sum_{\substack{\vert\bfalpha\vert = t\\ \bfalpha\subset\bfeps}}\sum_{\substack{\vert\bfdelta\vert = 1\\ \bfdelta\subset\bfeta}}\sgn(\bfalpha\cup\bfdelta,\bfeps\cup\bfeta\setminus(\bfalpha\cup\bfdelta))c_{\bfD\cup\bfdelta\cup\bfeps\setminus\bfalpha}c_{\bfD\cup\bfalpha\cup\bfeta\setminus\bfdelta}\bigg)^2,
	\end{equation*}
	and using Young's inequality, we have
	\begin{equation}
		\label{eq:k_prdm_main_estimate_proof_even_case_second_term_upper_bound}
		\cI_1(\bfD;\bfeps,\bfeta) \leq \dfrac{\tau}{2}\cJ_1(\bfD;\bfeps,\bfeta) + \dfrac{1}{2\tau}\cJ_2(\bfD;\bfeps,\bfeta),
	\end{equation}
	for all $\tau > 0$. \footnote{Roughly speaking, we are bounding a $t$-body term by the sum of a $(t-1)$-body term and a $(t + 1)$-body one. A similar idea was used by Christiansen in \cite{Christiansen2024HSEstimates}: bounding a two-body operator by a one-body operator and a three-body operator.}
	
	The term $\cJ_1$ can be expanded as
	\begin{equation*}
		\cJ_1(\bfD;\bfeps,\bfeta) = \sum_{\substack{\vert\bfbeta\vert,\vert\bfbeta'\vert = t - 1\\ \bfbeta,\bfbeta'\subset\bfeps}}\sgn(\bfbeta,\bfeps\cup\bfeta\setminus\bfbeta)\sgn(\bfbeta',\bfeps\cup\bfeta\setminus\bfbeta')\Xi(\bfD;\bfbeta,\bfbeta';\bfeps\setminus(\bfbeta\cup\bfbeta'),\bfeta).
	\end{equation*}
	This is more or less the same term as \eqref{eq:k_prdm_main_estimate_proof_odd_case_second_term}. Treating it in a very similar way yields
	\begin{equation}
		\label{eq:k_prdm_main_estimate_proof_even_case_second_term_first_part}
		\sum_{\substack{\vert\bfeps\vert = r + t - 1\\ \vert\bfeta\vert = r - t + 1}}\cJ_1(\bfD;\bfeps,\bfeta) = \sum_{s = 0}^{t - 1}(-1)^s{r - s \choose t - s}\sum_{\vert\bfeps\vert, \vert\bfeta\vert = r - s}\sum_{\vert\bfalpha\vert, \vert\bfbeta\vert = s}\Lambda(\bfD;\bfalpha,\bfbeta;\bfeps,\bfeta).
	\end{equation}
	
	The term $\cJ_2$ can be rewritten as
	\begin{equation*}
		\cJ_2(\bfD;\bfeps,\bfeta) = \sum_{\substack{\vert\bfalpha\vert,\vert\bfalpha'\vert = t\\ \bfalpha,\bfalpha'\subset\bfeps}}\sum_{\substack{\vert\bfdelta\vert,\vert\bfdelta'\vert = 1\\ \bfdelta,\bfdelta'\subset\bfeta}}
		\begin{multlined}[t]
			\sgn(\bfalpha\cup\bfdelta,\bfeps\cup\bfeta\setminus(\bfalpha\cup\bfdelta))\sgn(\bfalpha'\cup\bfdelta',\bfeps\cup\bfeta\setminus(\bfalpha'\cup\bfdelta'))\\
			\times\Xi(\bfD;\bfalpha\cup\bfdelta',\bfalpha'\cup\bfdelta;\bfeps\setminus(\bfalpha\cup\bfalpha'),\bfeta\setminus(\bfdelta\cup\bfdelta')).
		\end{multlined}
	\end{equation*}
	Using Lemma~\ref{lemma:rewritting_sums} with the pairs $(\bfalpha,\bfalpha')$ and $(\bfdelta,\bfdelta')$, we can expand this further:
	\begin{multline*}
		\cJ_2(\bfD;\bfeps,\bfeta)\\
		= \sum_{s = 0}^t\sum_{u = 0}^1\sum_{\substack{\vert\bfalpha\vert = t - s\\ \bfalpha\subset\bfeps}}\sum_{\substack{\vert\bfalpha'\vert,\vert\bfalpha''\vert=s\\\bfalpha',\bfalpha''\subset\bfeps\\ \bfalpha'\cap\bfalpha'' = \emptyset}}\sum_{\substack{\vert\bfdelta\vert = 1 - u\\ \bfdelta\subset\bfeta}}\sum_{\substack{\vert\bfdelta'\vert,\vert\bfdelta''\vert=u\\\bfdelta',\bfdelta''\subset\bfeta\\ \bfdelta'\cap\bfdelta'' = \emptyset}}
		\begin{multlined}[t]
			\sgn(\bfalpha\cup\bfalpha'\cup\bfdelta\cup\bfdelta',\bfeps\cup\bfeta\setminus(\bfalpha\cup\bfalpha'\cup\bfdelta\cup\bfdelta'))\\
			\times\sgn(\bfalpha\cup\bfalpha''\cup\bfdelta\cup\bfdelta'',\bfeps\cup\bfeta\setminus(\bfalpha\cup\bfalpha''\cup\bfdelta\cup\bfdelta''))\\
		\end{multlined}\\
		\times\Xi(\bfD;\bfalpha\cup\bfalpha'\cup\bfdelta\cup\bfdelta'',\bfalpha\cup\bfalpha''\cup\bfdelta\cup\bfdelta';\bfeps\setminus(\bfalpha\cup\bfalpha'\cup\bfalpha''),\bfeta\setminus(\bfdelta\cup\bfdelta'\cup\bfdelta'')).
	\end{multline*}
	With the help of \eqref{eq:sign_permutation_reverse_order} and \eqref{eq:sign_permutation_union_equals_product}, we find
	\begin{multline*}
		\begin{multlined}[t]
			\sgn(\bfalpha\cup\bfalpha'\cup\bfdelta\cup\bfdelta',\bfeps\cup\bfeta\setminus(\bfalpha\cup\bfalpha'\cup\bfdelta\cup\bfdelta'))\sgn(\bfalpha\cup\bfalpha''\cup\bfdelta\cup\bfdelta'',\bfeps\cup\bfeta\setminus(\bfalpha\cup\bfalpha''\cup\bfdelta\cup\bfdelta''))
		\end{multlined}\\
		=
		\begin{aligned}[t]
			(-1)^{s + u}&\sgn(\bfalpha'\cup\bfdelta'',(\bfeps\setminus(\bfalpha\cup\bfalpha'\cup\bfalpha'')\cup\bfeta\setminus(\bfdelta\cup\bfdelta'\cup\bfdelta''))\\
			&\times\sgn(\bfalpha''\cup\bfdelta',(\bfeps\setminus(\bfalpha\cup\bfalpha'\cup\bfalpha'')\cup\bfeta\setminus(\bfdelta\cup\bfdelta'\cup\bfdelta'')).
		\end{aligned}
	\end{multline*}
	Hence, $\cJ_2$ can be put in the slightly more compact form
	\begin{multline*}
		\cJ_2(\bfD;\bfeps,\bfeta) = \sum_{s = 0}^t\sum_{u = 0}^1(-1)^{s + u}\sum_{\substack{\vert\bfalpha\vert = t - s\\ \bfalpha\subset\bfeps}}\sum_{\substack{\vert\bfalpha'\vert,\vert\bfalpha''\vert=s\\\bfalpha',\bfalpha''\subset\bfeps\\ \bfalpha'\cap\bfalpha'' = \emptyset}}\sum_{\substack{\vert\bfdelta\vert = 1 - u\\ \bfdelta\subset\bfeta}}\sum_{\substack{\vert\bfdelta'\vert,\vert\bfdelta''\vert=u\\\bfdelta',\bfdelta''\subset\bfeta\\ \bfdelta'\cap\bfdelta'' = \emptyset}}\\
		\Lambda(\bfD;\bfalpha'\cup\bfdelta'',\bfalpha''\cup\bfdelta';\bfdelta\cup\bfeps\setminus(\bfalpha\cup\bfalpha'\cup\bfalpha''),\bfalpha\cup\bfeta\setminus(\bfdelta\cup\bfdelta'\cup\bfdelta'')).
	\end{multline*}
	Summing over $\bfeps$ and $\bfeta$, and using Lemma~\ref{lemma:rewritting_sums_3}, this simplifies further into
	\begin{multline*}
		\sum_{\substack{\vert\bfeps\vert = r + t - 1\\ \vert\bfeta\vert = r - t + 1}}\cJ_2(\bfD;\bfeps,\bfeta) = \sum_{s = 0}^t\sum_{u = 0}^1(-1)^{s + u}\sum_{\substack{\vert\bfeps\vert = r - s - u\\ \vert\bfeta\vert = r - s - u}}\sum_{\vert\bfalpha\vert = t - s}\sum_{\vert\bfalpha'\vert,\vert\bfalpha''\vert = s}\sum_{\vert\bfdelta\vert = 1 - u}\sum_{\vert\bfdelta'\vert,\vert\bfdelta''\vert = u}\\
		\Lambda(\bfD;\bfalpha'\cup\bfdelta'',\bfalpha''\cup\bfdelta';\bfdelta\cup\bfeps,\bfalpha\cup\bfeta).
	\end{multline*}
	Lastly, we use Lemma~\ref{lemma:rewritting_sums_3} with the pairs $(\bfalpha',\bfdelta'')$, $(\bfalpha'',\bfdelta')$, $(\bfdelta,\bfeps)$ and $(\bfalpha,\bfeta)$ to obtain
	\begin{multline}
		\label{eq:k_prdm_main_estimate_proof_even_case_second_term_second_part}
		\sum_{\substack{\vert\bfeps\vert = r + t - 1\\ \vert\bfeta\vert = r - t + 1}}\cJ_2(\bfD;\bfeps,\bfeta) = \sum_{s = 0}^t\sum_{u = 0}^1(-1)^{s + u}{r - s - u \choose t - s}{r - s - u \choose 1 - u}{s + u \choose u}^2\\
		\times\sum_{\vert\bfeps\vert,\vert\bfeta\vert = r - s - u}\sum_{\vert\bfalpha\vert,\vert\bfbeta\vert = s + u}\Lambda(\bfD;\bfalpha,\bfbeta;\bfeps,\bfeta).
	\end{multline}
	
	Combining \eqref{eq:k_prdm_main_estimate_proof_even_case_intermediate_summary}--\eqref{eq:k_prdm_main_estimate_proof_even_case_second_term_second_part} finishes the proof of Lemma~\ref{lemma:main_estimate_even_case}.
\end{proof}

\begin{proof}[Proof of Proposition~\ref{prop:cancellation_main_order} in the even case]
	The main ingredient of the proof is Lemma~\ref{lemma:main_estimate_even_case}. The first term in the right-hand side of \eqref{eq:main_estimate_even_case} is already of the desired form, so we do not change it. Regarding the remaining two terms, it is important to notice that, for a given $s$, the combinatorial factor in the second term is of order $N^{t - s - 1}$, whereas the one in the third term is of order $N^{t - s + 1}$. Hence, taking $\tau = N$ to ensure that both terms are of the same order, we obtain
	\begin{align*}
		\sum_{\vert\bfeps\vert,\vert\bfeta\vert = r - t}\sum_{\vert\bfalpha\vert,\vert\bfbeta\vert = t}\Lambda(\bfD;\bfalpha,\bfbeta;\bfeps,\bfeta) &\leq \sum_{s = 0}^{t - 1}C_{s,t}(r,N)\sum_{\vert\bfeps\vert,\vert\bfeta\vert = r - s}\sum_{\vert\bfalpha\vert,\vert\bfbeta\vert = s}\Lambda(\bfD;\bfalpha,\bfbeta;\bfeps,\bfeta)\\
		&\phantom{\leq} + \dfrac{r - t}{tN}\sum_{\vert\bfeps\vert,\vert\bfeta\vert = r - t}\sum_{\vert\bfalpha\vert,\vert\bfbeta\vert = t}\Lambda(\bfD;\bfalpha,\bfbeta;\bfeps,\bfeta)\\
		&\phantom{\leq} - \dfrac{1}{2tN}\sum_{\vert\bfeps\vert,\vert\bfeta\vert = r - t - 1}\sum_{\vert\bfalpha\vert,\vert\bfbeta\vert = t + 1}\Lambda(\bfD;\bfalpha,\bfbeta;\bfeps,\bfeta),
	\end{align*}
	for some $C_{s,t}$'s satisfying \eqref{eq:cancellation_main_order_condition_coefficients}. To conclude the proof of Proposition~\ref{prop:cancellation_main_order}, we still need to get rid of the last two terms. For the last term, we sum over $r$ and $\bfD$ and use that
	\begin{multline*}
		- \sum_{r = 0}^N\sum_{\vert\bfD\vert = N - r}\sum_{\vert\bfeps\vert,\vert\bfeta\vert = r - t - 1}\sum_{\vert\bfalpha\vert,\vert\bfbeta\vert = t + 1}\Lambda(\bfD;\bfalpha,\bfbeta;\bfeps,\bfeta)\\
		= -\sum_{\substack{\vert\bfalpha\vert,\vert\bfbeta\vert = t + 1\\ \bfalpha\cap\bfbeta = \emptyset}}\bigg(\sum_{\vert\bfA\vert - N - t - 1}\sgn(\bfalpha\cup\bfbeta,\bfA)c_{\bfA\cup\bfalpha}c_{\bfA\cup\bfbeta}\bigg)^2 \leq 0,
	\end{multline*}
	which results from Lemma~\ref{lemma:rewritting_sums}. To deal with the second term, we add the nonnegative quantity
	\begin{multline*}
		\dfrac{N - r}{tN}\sum_{\vert\bfD\vert = N - r}\sum_{\vert\bfeps\vert,\vert\bfeta\vert = r - t}\bigg(\sum_{\vert\bfalpha\vert = t}\sgn(\bfalpha,\bfeps\cup\bfeta)c_{\bfA\cup\bfalpha\cup\bfeps}c_{\bfA\cup\bfalpha\cup\bfeta}\bigg)^2\\
		= \dfrac{N - r}{tN}\sum_{s = 0}^{t}{N - r + t - s \choose t - s}\sum_{\vert\bfD\vert = N - r + t - s}\sum_{\vert\bfeps\vert,\vert\bfeta\vert = r - t}\sum_{\vert\bfalpha\vert,\vert\bfbeta\vert = s}\Lambda(\bfD;\bfalpha,\bfbeta;\bfeps,\bfeta),
	\end{multline*}
	which is a consequence of Lemmas~\ref{lemma:rewritting_sums}~and~\ref{lemma:rewritting_sums_3}. This allows us to write
	\begin{multline*}
		\sum_{r = 0}^N\sum_{\vert\bfD\vert = N - r}\sum_{\vert\bfeps\vert,\vert\bfeta\vert = r - t}\sum_{\vert\bfalpha\vert,\vert\bfbeta\vert = t}\Lambda(\bfD;\bfalpha,\bfbeta;\bfeps,\bfeta)\\
		\begin{aligned}[t]
			&\leq \sum_{s = 0}^{t - 1}\sum_{r = 0}^NC_{s,t}(r,N)\sum_{\vert\bfD\vert = N - r}\sum_{\vert\bfeps\vert,\vert\bfeta\vert = r - s}\sum_{\vert\bfalpha\vert,\vert\bfbeta\vert = s}\Lambda(\bfD;\bfalpha,\bfbeta;\bfeps,\bfeta)\\
			&\phantom{\leq} + \dfrac{N - t}{tN}\sum_{r = 0}^N\sum_{\vert\bfD\vert = N - r}\sum_{\vert\bfeps\vert,\vert\bfeta\vert = r - t}\sum_{\vert\bfalpha\vert,\vert\bfbeta\vert = t}\Lambda(\bfD;\bfalpha,\bfbeta;\bfeps,\bfeta),
		\end{aligned}
	\end{multline*}
	for some new coefficients $C_{s,t}$ that still satisfy \eqref{eq:cancellation_main_order_condition_coefficients}. Finally, shifting the last term to the left of the equation concludes the proof of Proposition~\ref{prop:cancellation_main_order} for $t$ even.
\end{proof}

\section{Conclusion of the proof of Theorem~\ref{th:HS_estimates}}

\begin{proof}[Proof of Theorem~\ref{th:HS_estimates}]
	We use a proof by induction to show that, for any $t\in\{0,\dots,k\}$, there exist a family of real coefficients
	\begin{equation*}
		\left\{C_{s,t}(r,N): s\in\{0,\dots,t\},0\leq r\leq N\right\}
	\end{equation*}
	and a nonnegative constant $C_t$ depending only on $t$ and $k$, and satisfying
	\begin{equation}
		\label{eq:cancellation_main_order_condition_coefficients2}
		\left\vert C_{s,t}(r,N)\right\vert \leq C_tN^{k - s},
	\end{equation}
	for all $0\leq r \leq N$, and such that, for all normalised $\Psi\in\mfH^{\wedge N}$ expanded as in Lemma~\ref{lemma:rewriting_hs_norm}, the following estimate holds:
	\begin{equation}
		\label{eq:main_theorem_induction_property}
		\Vert\Gamma^{(k)}\Vert_\HS^2 \leq \sum_{s = 0}^t\sum_{r = 0}^NC_{s,t}(r,N)\sum_{\vert\bfD\vert = N - r}\sum_{\vert\bfeps\vert,\vert\bfeta\vert = r - s}\sum_{\vert\bfalpha\vert,\vert\bfbeta\vert = s}\Lambda(\bfD;\bfalpha,\bfbeta;\bfeps,\bfeta).
	\end{equation}
	The induction is made over decreasing values of $t$. Thanks to Lemma~\ref{lemma:rewriting_hs_norm}, it is easy to see that \eqref{eq:main_theorem_induction_property} is true for $t = k$. Suppose now that \eqref{eq:main_theorem_induction_property} is true for some $t \geq 1$, and let us prove it for $t - 1$. To do so, we would like to apply Proposition~\ref{prop:cancellation_main_order} to get rid of the $s = t$ term in \eqref{eq:cancellation_main_order_condition_coefficients2}. However, we cannot do so directly because it is not of the right form: there is an extra factor $C_{s,t}(r,N)$ in the sum over $r$. To circumvent this, we add the nonnegative quantity
	\begin{multline}
		\label{eq:main_theorem_large_nonnegative_quantity}
		\sum_{r=0}^N\left[C_tN^{k - t} - C_{t,t}(r,N)\right]\sum_{\vert\bfD\vert = N - r}\sum_{\substack{\vert\bfeps\vert,\vert\bfeta\vert = r - t\\ \bfeps\cap\bfeta = \emptyset}}\Bigg(\sum_{\vert\bfalpha\vert = t}\sgn(\bfalpha,\bfeps\cup\bfeta)c_{\bfD\cup\bfalpha\cup\bfeps}c_{\bfD\cup\bfalpha\cup\bfeta}\Bigg)^2\\
		\begin{multlined}[t]
			= \sum_{r=0}^N\left[C_tN^{k - t} - C_{t,t}(r,N)\right]\sum_{s = 0}^{t}{N - r + t - s \choose t - s}\\
			\times\sum_{\vert\bfD\vert = N - r + t - s}\sum_{\vert\bfeps\vert,\vert\bfeta\vert = r}\sum_{\vert\bfalpha\vert,\vert\bfbeta\vert = s}\Lambda(\bfD;\bfalpha,\bfbeta;\bfeps,\bfeta),
		\end{multlined}
	\end{multline}
	which we rewrote using Lemmas~\ref{lemma:rewritting_sums}~and~\ref{lemma:rewritting_sums_3}. Doing so and using the estimate
	\begin{equation*}
		\left[C_tN^{k - t} - C_{t,t}(r,N)\right]{N - r + t - s \choose t - s} \leq CN^{k - s},
	\end{equation*}
	we can find new coefficients $C_{s,t-1}$ and a nonnegative constant $C_{t - 1}$ satisfying
	\begin{equation}
		\label{eq:cancellation_main_order_condition_coefficients3}
		\left\vert C_{s,t-1}(r,N)\right\vert \leq C_{t - 1}N^{k - s},
	\end{equation}
	and such that
	\begin{align*}
		\Vert\Gamma^{(k)}\Vert_\HS^2 &\leq C_tN^{k - t}\sum_{r = 0}^N\sum_{\vert\bfD\vert = N - r}\sum_{\vert\bfeps\vert,\vert\bfeta\vert = r - t}\sum_{\vert\bfalpha\vert,\vert\bfbeta\vert = t}\Lambda(\bfD;\bfalpha,\bfbeta;\bfeps,\bfeta)\\
		&\phantom{\leq} + \sum_{s = 0}^{t - 1}\sum_{r = 0}^NC_{s,t-1}(r,N)\sum_{\vert\bfD\vert = N - r}\sum_{\vert\bfeps\vert,\vert\bfeta\vert = r - s}\sum_{\vert\bfalpha\vert,\vert\bfbeta\vert = s}\Lambda(\bfD;\bfalpha,\bfbeta;\bfeps,\bfeta).
	\end{align*}
	Applying Proposition~\ref{prop:cancellation_main_order} to the first term and changing again the $C_{s,t-1}$ coefficients and the constant $C_{t-1}$ in \eqref{eq:cancellation_main_order_condition_coefficients3}, we obtain
	\begin{equation*}
		\Vert\Gamma^{(k)}\Vert_\HS^2 \leq \sum_{s = 0}^{t - 1}\sum_{r = 0}^NC_{s,t-1}(r,N)\sum_{\vert\bfD\vert = N - r}\sum_{\vert\bfeps\vert,\vert\bfeta\vert = r - s}\sum_{\vert\bfalpha\vert,\vert\bfbeta\vert = s}\Lambda(\bfD;\bfalpha,\bfbeta;\bfeps,\bfeta),
	\end{equation*}
	which concludes the induction step.
	
	Finally, we use \eqref{eq:main_theorem_induction_property} with $t = 0$ and the normalisation of $\Psi$ to deduce
	\begin{equation*}
		\Vert\Gamma^{(k)}\Vert_\HS^2 \leq C_kN^k\sum_{r = 0}^N\sum_{\vert\bfD\vert = N - r}\sum_{\vert\bfeps\vert,\vert\bfeta\vert = r}\Lambda(\bfD;\bfeps,\bfeta) = C_kN^k.
	\end{equation*}
	This concludes the proof of Theorem~\ref{th:HS_estimates}.
\end{proof}

	\noindent
	\textbf{Acknowledgments.} The author would like to express his sincere gratitude to Phan Thành Nam and Arnaud Triay for the fruitful discussions and their continued support. Partial support by the Deutsche Forschungsgemeinschaft (DFG, German Research Foundation) through the TRR 352 Project ID. 470903074 and by the European Research Council through the ERC CoG RAMBAS  Project Nr. 101044249 is acknowledged.
	
	\printbibliography

	\appendix
	\label{appendix:proof_combinatorial_results}
	\section{Proof of combinatorial results}
	\begin{proof}[Proof of Lemma~\ref{lemma:rewritting_sums}]
	Due to the condition \eqref{eq:rewritting_sums_index_distinct_condition}, the multi-indices $\bfD,\bfA$ and $\bfB$ in the right-hand side of \eqref{eq:rewritting_sums} satisfy the condition $\bfD\cap\bfA = \bfD\cap\bfB = \emptyset$. Consider the sets
	\begin{equation*}
		S_1 \coloneqq \left\{(\bfA,\bfB): \vert \bfA\vert = N, \vert\bfB\vert = M, \textmd{ $\bfA$ and $\bfB$ disjoint}\right\}
	\end{equation*}
	and
	\begin{equation*}
		S_2 \coloneqq \left\{(\bfD,\bfA,\bfB): \vert\bfD\cup\bfA\vert = N, \vert\bfD\cup\bfB\vert = M, \textmd{ $\bfA, \bfB$ and $\bfD$ pairwise disjoint}\right\}.
	\end{equation*}
	Consider the map $h: S_1\rightarrow S_2$ defined by
	\begin{equation*}
		h: (\bfA,\bfB) \mapsto (\bfA\cap \bfB,\bfA\setminus \bfB,\bfB \setminus \bfA).
	\end{equation*}
	It is easy to see that $h$ is a bijective map from $S_1$ to $S_2$. Hence, we may write
	\begin{equation*}
		\sum_{(\bfA,\bfB)\in S_1}f(\bfA,\bfB) = \sum_{(\bfD,\bfA,\bfB)\in S_2}f(h^{-1}(\bfD,\bfA,\bfB)) = \sum_{(\bfD,\bfA,\bfB)\in S_2}f(\bfD\cup\bfA,\bfD\cup\bfB).
	\end{equation*}
	Parametrising the length of $\bfA$ by $r$ concludes the proof of Lemma~\ref{lemma:rewritting_sums}.
\end{proof}

\begin{proof}[Proof of Lemma~\ref{lemma:rewritting_sums_2}]
	Consider the sets
	\begin{equation*}
		S_1 \ceqq \left\{(\bfA,\bfB): \vert\bfA\vert = N, \vert\bfB\vert = M, \textmd{ $\bfA$ and $\bfB$ disjoint}\right\}
	\end{equation*}
	and
	\begin{equation*}
		S_2 \ceqq \left\{(\bfA,\bfB): \vert\bfA\vert = N + M, \vert\bfB\vert = M, \bfB\subset\bfA\right\},
	\end{equation*}
	as well as the bijective map $h:S_1 \rightarrow S_2$ defined by
	\begin{equation*}
		h(\bfA,\bfB) \mapsto (\bfA\cup\bfB,\bfB).
	\end{equation*}
	The equality \eqref{eq:rewritting_sums_3} follows from the fact that, for all $(\bfA,\bfB)\in S_2$,
	\begin{equation*}
		h^{-1}(\bfA,\bfB) = (\bfA\setminus\bfB,\bfB).
	\end{equation*}
\end{proof}

\begin{proof}[Proof of Lemma~\ref{lemma:rewritting_sums_3}]
	Applying Lemma~\ref{lemma:rewritting_sums_3}, we deduce
	\begin{equation*}
		\sum_{\vert\bfA\vert = N}\sum_{\vert\bfB\vert = M}f(\bfA\cup\bfB) = \sum_{\vert\bfA\vert = N + M}\sum_{\substack{\vert\bfB\vert = M\\ \bfB\subset \bfA}}f(\bfA) = {N + M \choose N}\sum_{\vert\bfA\vert = N + M}f(\bfA).
	\end{equation*}
\end{proof}

\end{document}